\definecolor{mygray}{gray}{0.35}
\newcommand{\faded}[1]{\textcolor{mygray}{#1}~}
\theoremstyle{plain}
\newtheorem{thm}{Theorem}
\theoremstyle{definition}
\newtheorem{defn}[thm]{Definition}
\theoremstyle{definition}
\newtheorem{example}[thm]{Example}
\theoremstyle{remark}
\newtheorem*{rem*}{Remark}
\theoremstyle{plain}
\newtheorem{lem}[thm]{Lemma}
\theoremstyle{plain}
\theoremstyle{plain}
\global\long\def\Prop{Prop}
\global\long\def\Type{Type}
\global\long\def\one{\texttt{unit}}
\global\long\def\nat{\texttt{nat}}
\global\long\def\V{{V}}
\global\long\def\H{\mathcal{H}}
\global\long\def\Sign{Σ}
\global\long\def\plus{plus}
\global\long\def\elem{\mathcal{E}}
\global\long\def\sumb{sum}
\global\long\def\mult{mult}
\global\long\def\succe{s}
\global\long\def\minus{minus}
\global\long\def\pred{pred}
\global\long\def\prodb{prod}
\global\long\def\ok{⊢_{\!\text{\tiny ok}}}
\global\long\def\ell{⊢_{\!\text{\tiny eal}}}
\global\long\def\Pgr{\mathcal{P}}
\newcommand{\reduce}[1]{\xrightarrow{\text{#1}}}
\newsavebox{\fmbox}
\newenvironment{fmpage}[1]
     {\begin{center}\begin{lrbox}{\fmbox}\begin{minipage}{#1}}
     {\end{minipage}\end{lrbox}\fbox{\usebox{\fmbox}}\end{center}}
\title{Controlling program extraction in Elementary Linear Logic}
\author{Marc Lasson
\institute{École Normale Supérieure de Lyon, France}
\email{marc.lasson@ens-lyon.org}
}
\begin{document}
\maketitle

\begin{abstract}

We present an adaptation, based on program extraction in elementary linear
logic, of Krivine \& Leivant's system FA₂.  This system allows to write
higher-order equations in order to specify the computational content of
extracted programs.  
The user can then prove a generic formula, using these equations as axioms,
whose proof can be extracted into programs that normalize in elementary time and
satisfy the specifications.
Finally, we show that every elementary
recursive functions can be implemented in this system. 

\end{abstract}

\section*{Introduction}

Elementary linear logic is a variant of linear logic introduced by
Jean-Yves Girard in an appendix of~\cite{LLL} that characterizes, through the
Curry-Howard correspondence, the class of elementary recursive
functions.
There are two usual ways to program in such a light logic: by using it
as a type system of a $λ$-calculus or by extracting programs from
proofs in a sequent calculus (see \cite{DanosJoinet} for instance).

The former is used for propositional fragments of Elementary Affine Logic
in~\cite{Simone} and of Light Affine Logic in~\cite{Baillot}.  However, when the
pogrammer provides a $λ$-term which is not typable, he has no
clue to find a suitable term implementing the same function.
In the later approach, the programmer must keep in mind the underlying
computational behaviour of his function during the proof and check
later, by external arguments, that the extracted $λ$-term
implements the desired function.

In this paper, we describe a system in which we try to make the second
approach a bit more practical.
Firstly because our system is endowed with a kind of proof irrelevance: all
proofs of the same formula are extracted to extensionally equivalent
terms; and then because the program automatically satisfy the given
specification used as axioms during the proof.

FA₂ is an intuitionistic second-order logic whose formulas are built upon
first-order terms, predicate variables, arrows and two kind of quantifiers, one on
first-order variables and the other on predicate variables. Jean-Louis Krivine
described in \cite{LambdaCalculusTypesAndModels} a methodology to use
this system for programming with proofs. In this system, the induction principle 
for integers may be expressed by 
  $$∀X, (∀y,X\,y ⇒ X\,(s\,y)) ⇒ X\,0 ⇒ X\,x.$$
This formula is written $N\,x$ and it is used to represent integers. The
programmer then gives some specifications of a function. For instance for the
addition, he may give: 
\begin{eqnarray*}
  \plus(0, y) & = & y \\
  \plus (s(x), y) &  = & s(\plus(x,y)).
\end{eqnarray*}
Now, if he finds a proof of $$∀x\,y,N\,x ⇒ N\,y ⇒ N\,(\plus(x,y))$$ in which he
is allowed to rewrite formulas with the specifications, then it is proved that
the $λ$-term extracted from this proof using standard techniques is a program
satisfying the specifications.

\noindent
We have adapted the system FA₂ of Leivant and Krivine following two directions:
\begin{itemize}
  \item We replace the grammar of first-order terms by the   
  whole $λ$-calculus. We can then extract higher-order functions
  instead of purely arithmetical functions. We have shown in~\cite{AF2PTS} 
  that the resulting system can be described as a pure type system (PTS). We
  have also built an extensionnal model, and re-adapted realizability tools
  for it. Here we only present the material needed for elementary
  programming and we refer the reader to~\cite{AF2PTS} for more details.
  \item We ensure complexity bounds by making its logic elementary.
\end{itemize}

In the next section, we introduce the grammar for our formulas and describe how
we interpret them. In section 2, we present our proof system and how we can
program with it. In the last section, we prove that we characterize the class of
elementary recursive functions. We bring our system back to the usual Elementary
Affine Logic in order to have the correctness. Finally we give two proofs of the
completeness: one by using the completeness of \emph{Elementary Affine Logic} 
(henceforth EAL) and the other by invoking,
like in~\cite{DanosJoinet}, Kalmar's characterization of elementary functions.
We present the second proof as an illustration of how to program in our system.
Indeed, it will give the programmer a direct way to code elementary functions
without having to encode them in EAL.

\section{Types, First-Order Terms and Formulas}
We assume for the rest of this document that we have at our disposal
three disjoint sets of infinitely many variables:
\begin{itemize}
  \item the set $\V_0$ of so-called {\it type variables} whose elements are
  denoted with letters from the beginning of the Greek alphabet and some 
  variations around them (ie. $α$, $β$, $α₁$, $α₂$, ...), 
  \item the set $\V_1$ of {\it first-order variables} whose elements are denoted with 
  letters from the end of the Latin alphabet (ie. $x$, $y$, $z$, $x₁$, $x₂$, ...),
  \item the set $\V_2$ of {\it second-order variables} whose elements are denoted with 
  uppercase letters from the end of the Latin alphabet (ie. $X$, $Y$, $Z$, $X₁$, $X₂$, ...).
\end{itemize}
We also assume that we have an injection of second-order variables into type variables
and write $α_X$ the image of a variable $X$ by this injection. This will be useful later 
when we will send formulas onto system $\mathcal{F}$ types by a forgetful projection.

\begin{defn}
The following grammars define the terms of the system:
\begin{enumerate}
\item Types are system $\mathcal{F}$ types:
\vspace{-0.5em}
\begin{equation*}
τ,σ,...  \quad := \quad α
  \quad|\quad ∀α,τ
  \quad|\quad σ → τ\end{equation*}
\vspace{-2.5em}
\item First-order terms are Church-style $\lambda$-calculus terms:
\vspace{-0.5em}
\begin{equation*}
s,t,... \quad := \quad x
 \quad | \quad (s\, t)
 \quad | \quad (t\,τ)
 \quad | \quad \lambda x:τ.t
 \quad | \quad \Lambda α.t\end{equation*}
\vspace{-2.5em}
\item Finally, second-order formulas are given by the following grammar:
\vspace{-0.5em}
\begin{equation*}
P,Q,... \quad := \quad X\, t_{1}\, t_{2}\,...\, t_{n}\\
 \quad | \quad P ⊸ Q\\
 \quad | \quad ∀ X:[τ_{1},..,τ_{n}],\, P\\
 \quad | \quad ∀ x:τ,\, P\\
 \quad | \quad ∀ α,\, P
 \quad | \quad ! P\end{equation*} 
\end{enumerate}
\end{defn}

Theses grammars describe terms that will be used in this paper, 
$λ$, $Λ$ and the three different $∀$ behave as binders like in usual
calculi. We always consider terms up to $α$-equivalence and we do
not bother with capture problems. We also admit we have 
six notions of substitution which we assume to be well-behaved with regard to 
the $α$-equivalence (all these notions are more seriously defined in \cite{AF2PTS}): 
\begin{enumerate}
  \item the substitution $τ[σ/α]$ of a type variable $α$ by a type $σ$ in a type $τ$, 
  \item the substitution $t[τ/α]$ of a type variable $α$ by a type $τ$ in a first-order term $t$, 
  \item the substitution $t[s/x]$ of a first-order variable $x$ by a first-order term $s$ in a first-order term $t$, 
  \item the substitution $P[τ/α]$ of a type variable $α$ by a type $τ$ in a formula $P$, 
  \item the substitution $P[t/x]$ of a first-order variable $x$ by a first-order term $t$ in a formula $P$, 
  \item the substitution $P[Q/X\,x_1\,...\,x_n]$ of a second-order variable $X$ by a formula $Q$ with parameters $x₁,...,x_n$ in a formula $P$. 
\end{enumerate}

  The last one is not very usual (the notation comes from
\cite{LambdaCalculusTypesAndModels}): it replaces occurrences of the form
$X\,t_1\,...t_n$ by the formula $Q[t_1/x_n]...[t_n/x_n]$ and it is not defined
if $P$ contains occurrences of $X$ of the form $X\,t_1\,...\,t_k$ with $k \not=
n$. The simple type system we are going to define will guarantee us that such
occurrences cannot appear in a well-typed formula.

And since we can build redexes in terms (of the form $((λx:τ.t₁)\,t₂)$ and
$((Λα.t)\,τ)$) we have a natural notion of $β$-reduction for first-order terms
which we can extend to formulas (we write $t₁ >_β t₂$ and $P₁ >_β P₂$ for the
transitive closure of the $β$-reduction on first-order terms and formulas). 

We adopt the usual conventions about balancing of parentheses: arrows are
right associative (it means that we write $A ⊸ B ⊸ C$ instead of $A ⊸ (B ⊸ C)$) 
and application is left associative (meaning we write $t₁\,t₂\,t₃$ instead of 
$(t₁\,t₂)\,t₃$). By abuse of notation, we allow ourselves not to write the type
of first and second order $∀$ when we can guess them from the context. We 
also write $!^k P$ instead of $!...!P$ with $k$ exclamation marks. 

\begin{example}
  Here are some examples of formulas of interest : 
\begin{enumerate}
\item Leibniz's equality between two terms $t₁$ and $t₂$ of type $τ$ 
\[ \forall X:[\tau],X\, t₁⊸ X\, t₂\]
which we write it $t₁ =_τ t₂$ in the remaining of this document. 
\item The induction principle for a natural number $x$ \[
\forall X:[\nat],
  !(\forall y,X\, y⊸ X\,(s\, y)) ⊸ 
  !(X\,0⊸ X\, x)\]
 which we write $N\, x$ where $\nat$ will be the type $
\forall\alpha,(\alpha\rightarrow\alpha)\rightarrow\alpha\rightarrow\alpha$
of natural numbers in system $\mathcal{F}$ and where $s$ and $0$ are first-order variables.
\item The tensor between two formulas $P$ and $Q$,
   $\forall X,(P⊸ Q⊸ X)⊸ X$ written $P ⊗ Q$. 
\item And the extensionality principle  \[
\forall\alpha\,\beta,\forall f\, g:\alpha\rightarrow\beta,(\forall x:\alpha,f\, x=_{\beta}g\, x)⊸ f=_{\alpha\rightarrow\beta}g\]
\end{enumerate}
\end{example}

\begin{defn}\label{DefContext}
 A \textit{context} is an ordered list of elements of the form: 
\begin{equation*}
  \begin{array}{lcccr}
     α : \Type &\text{ or }& x : τ &\text{ or }& X : [τ_1, ..., τ_n]. \\
  \end{array}
\end{equation*}
In the following, the beginning of the lowercase Latin alphabet $a, b, ...$ 
will designate variables of any sort and the beginning of uppercase Latin
alphabet $A, B, C, ...$ designate $\Type$, $\Prop$, any type $τ$ or something
of the form $[τ_1, ..., τ_n]$.
We write $a ∈ Γ$, if there is an element of the form $a : \_$ in
$Γ$. A context $Γ$ is said to be \textit{well-formed} if ``$Γ$ is well-formed" can be derived in the
type system. A formula $F$ (resp. a term $t$, resp. a type $τ$) is said to be
well-formed in a context $Γ$ if the sequent $Γ \ok F : \Prop$ (resp.
$Γ \ok t : τ$ for some $τ$, resp. $Γ \ok τ : \Type$) is derivable in the type
system. 
\end{defn}

\begin{fmpage}{0.95\textwidth}
\begin{prooftree}
  \AxiomC{}
  \UnaryInfC{$\emptyset$ is well-formed}

  \AxiomC{$Γ$ is well-formed} 
  \RightLabel{$α \not∈ Γ$}
  \UnaryInfC{$Γ, α:\Type$ is well-formed} 

\noLine
\BinaryInfC{}
\end{prooftree}
\begin{prooftree}

  \AxiomC{$Γ ⊢ τ : \Type$}
  \LeftLabel{$x \not∈ Γ$}
  \UnaryInfC{$Γ, x:τ$ is well-formed} 

  \AxiomC{$Γ⊢τ_1:\Type$ ... $Γ⊢τ_n:\Type$}
  \RightLabel{$X \not∈ Γ$}
  \UnaryInfC{$Γ, X:[τ_1,...,τ_n]$ is well-formed}
  
\noLine
\BinaryInfC{}
\end{prooftree}
\begin{prooftree}
  \AxiomC{$Γ$ is well-formed} 
  \UnaryInfC{$Γ,a:A \ok a:A$}
  
  \AxiomC{$Γ \ok b : B$}
  \RightLabel{$a \not= b$}
  \UnaryInfC{$Γ,a:A \ok b : B$}

  \AxiomC{$Γ \ok P : \Prop$}
  \UnaryInfC{$Γ \ok !P : \Prop$}
\noLine
\TrinaryInfC{}
\end{prooftree}\begin{prooftree}

\AxiomC{$Γ \ok τ : \Type$}
\AxiomC{$Γ \ok σ : \Type$}
\BinaryInfC{$Γ \ok τ → σ : \Type$}

\AxiomC{$Γ, α:\Type \ok τ : \Type$}
\UnaryInfC{$Γ \ok (∀ α, τ) : \Type$}

\AxiomC{$Γ, x:τ \ok t : σ$}
\UnaryInfC{$Γ \ok (\lambda x:τ.t) : τ → σ$}

\noLine
\TrinaryInfC{}
\end{prooftree}\begin{prooftree}
\AxiomC{$Γ, α : \Type \ok t : τ$}
\UnaryInfC{$Γ \ok (\Lambda α.t) : ∀ α,τ$}
\AxiomC{$Γ \ok f : τ → σ$}
\AxiomC{$Γ \ok a : τ$}
\BinaryInfC{$Γ \ok (f\,a):σ$}

\AxiomC{$Γ \ok f : \Lambdaα.σ$}
\AxiomC{$Γ \ok τ : \Type$}
\BinaryInfC{$Γ \ok (f\,τ):σ[τ/α]$}

\noLine
\TrinaryInfC{}
\end{prooftree}\begin{prooftree}
\AxiomC{\hspace{-1em}$Γ, X:[τ_1,...,τ_n] \ok Q : \Prop$}
\UnaryInfC{$Γ \ok (∀ X : [τ_1,...,τ_n], Q) : \Prop$}

\AxiomC{$Γ, x:τ \ok Q : \Prop$}
\UnaryInfC{$Γ \ok (∀ x : τ, Q) : \Prop$}

\AxiomC{$Γ, α:\Type \ok Q : \Prop$}
\UnaryInfC{$Γ \ok (∀ α, Q) : \Prop$}

\noLine
\TrinaryInfC{}

\end{prooftree}\begin{prooftree}
\AxiomC{$Γ \ok P : \Prop$}
\AxiomC{$Γ \ok Q : \Prop$}
\BinaryInfC{$Γ \ok (P ⊸ Q) : \Prop$}

\AxiomC{$Γ \ok t_1 : τ_1 \quad \cdots \quad Γ \ok t_n : τ_n$}
\AxiomC{$Γ \ok X : [τ_1,...,τ_n]$}
\BinaryInfC{$Γ \ok X\,t_1\,...\,t_n : \Prop$}

\noLine
\BinaryInfC{}
\end{prooftree}
\begin{center}
\textbf{Type system for checking well-formedness}
\end{center}
\end{fmpage}

\begin{example}
  These formulas are well-typed :
  \begin{enumerate}
    \item $Γ,x:τ,y:τ \ok x =_τ y : \Prop$, 
    \item $Γ,\succe:\nat⊸\nat,0:\nat,x:\nat \ok N x : \Prop$, 
    \item $Γ,X:\Prop,Y:\Prop \ok X ⊗ Y : \Prop$, 
    \item $\ok \forall\alpha\,\beta,\forall f\, g:\alpha\rightarrow\beta,(\forall
x:\alpha,f\, x=_{\beta}g\, x)⊸ f=_{\alpha\rightarrow\beta}g : \Prop$.
  \end{enumerate}
\end{example}

~\\
We have shown in \cite{AF2PTS} that this simple system have numerous good
properties of pure type systems (like subject reduction).

\subsection*{Interpretations in standard models}

In this section, we build a small realizability model for our proof system which
we will use later to prove the correctness with respect to the specification 
of the extracted proof. One of our goal is to make the model satisfy the
extensionality principle, because we will need to be able to replace in our
proofs higher-order terms by other extensionally equal terms. 

We define the set $\Pgr$ of programs to be the set of pure $λ$-terms modulo
$β$-reduction.  In the following, we interpret terms in $\Pgr$, types by partial
equivalence relations (PER) on $\Pgr$ and second-order variables by sets of
element in $\Pgr$ stable by extensionality (you are not allowed to consider sets
which are able to distinguish terms that compute the same things). Finally,
formulas are interpreted as classical formulas: all informations about
linearity and exponentials are forgotten. Indeed, we forget all complexity
informations because the only purpose of model theory here is to have result
about the compliance with respect to the specifications. 

\begin{defn}
Let $Γ$ be a well-formed context. A $Γ$-model consists of three partial
functions recursively define below. The first one is map from type variables 
to PERs, the second is a map from first-order variables to $\Pgr$ and the last
one is a map from second-order variables to sets of tuples of programs. 
\begin{itemize}
\item If $Γ$ is empty, then the only $Γ$-model is three empty maps. 
\item If $Γ$ has the form $Δ, x : τ$ and if $ℳ=(ℳ_0, ℳ_1,ℳ_2)$ 
is a $Δ$-model, then for any $t ∈ ⟦τ⟧_ℳ$, $(ℳ_0, ℳ_1[x↦t],ℳ_2)$ is a $Γ$-model
(in the following, we simply write it $ℳ[x↦t]$).
\item If $Γ$ has the form $Δ, α : \Type$ and if $ℳ=(ℳ_0, ℳ_1,ℳ_2)$
  is a $Δ$-model, then for any PER $R$, 
  $(ℳ_0[α↦ R], ℳ_1,ℳ_2)$ is a $Γ$-model (we write it $ℳ[α↦R]$).
\item If $Γ$ has the form $Δ, X : [τ_1, ...,τ_n]$ and if  $ℳ=(ℳ_0, ℳ_1,ℳ_2)$
  is a $Δ$-model, then for any $E ⊆ ⟦τ_1⟧_ℳ \times...\times⟦τ_n⟧_ℳ$ such
  that $E$ satisfy the \emph{stability condition}
   $$\text{If }(t₁,...,t_n) ∈ E ∧ t₁ ∼^ℳ_{τ₁} t₁' ∧ ... ∧ t_n ∼^ℳ_{τ_n} t_n'\text{, then}
     (t'₁,...,t'_n) ∈ E$$ 
  $(ℳ_0, ℳ_1,ℳ_2[X↦ E])$ is a $Γ$-model (we write it $ℳ[X↦E]$).
\end{itemize}
Where $∼^ℳ_τ$ is a partial equivalence relation whose domain is written $⟦τ⟧_ℳ$
defined recursively on the structure of $τ$, 
  \begin{itemize}
    \item $∼^ℳ_α$ is equal to $ℳ₀(α)$, 
    \item $∼^ℳ_{σ→τ}$ is defined by 
      $t₁ ∼^ℳ_{σ→τ} t₂ ⇔ ∀ s₁ s₂, s₁ ∼^ℳ_σ s₂ ⇒ (t₁\,s₁) ∼^ℳ_τ (t₂\,s₂),$ 
    \item $∼^ℳ_{∀α,τ} = \bigcap_{R\text{ is PER}} ∼^{ℳ[α↦R]}_τ$.
  \end{itemize}
Intuitively $t₁ ∼^ℳ_τ t₂$ means the pure $λ$-terms $t₁$ and $t₂$ are of type $τ$
and they are extensionally equivalent. 
\end{defn}

Now, we can define the interpretation $⟦t⟧_ℳ$ of a first-order term $t$ such that $Γ \ok t
: τ$ in a $Γ$-model $ℳ$ to be the pure $λ$-term obtained by replacing all occurrences
of free variables by their interpretation in $ℳ$ and by erasing type
information. And we can prove substitution lemmas. 

\begin{lem} For any $Γ$-models $ℳ$, 
\begin{enumerate} 
  \item If $Γ, α:\Type \ok τ:\Type$ and $Γ \ok σ :\Type$, then
     $⟦τ[\sigma/α]⟧_ℳ = ⟦τ⟧_{ℳ[α↦∼^ℳ_σ]}$,
  \item If $Γ, α : \Type \ok t : σ$ and $Γ \ok τ : τ$, then 
     $⟦t[τ/α]⟧_ℳ = ⟦t⟧_{ℳ[α↦∼^ℳ_τ]}$
  \item If $Γ, x : σ \ok t : τ$ and $Γ \ok s : σ$, then 
     $⟦t[s/x]⟧_ℳ = ⟦t⟧_{ℳ[x↦⟦s⟧_ℳ]}$
  \item If $Γ \ok t : τ$, $t \equiv_{β} t'$ and $Γ \ok t':τ$, then $⟦t⟧_ℳ = ⟦t'⟧_ℳ$. 
\end{enumerate}
\end{lem}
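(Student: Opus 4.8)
The plan is to establish the four statements by induction, treating them in the order (1)--(4) since each later item will lean on the earlier ones. For (1) I would actually prove the stronger fact that the two partial equivalence relations coincide, $\sim^{ℳ}_{\tau[\sigma/\alpha]} = \sim^{ℳ[\alpha \mapsto \sim^{ℳ}_\sigma]}_{\tau}$; the stated equality of domains $⟦\tau[\sigma/\alpha]⟧_ℳ = ⟦\tau⟧_{ℳ[\alpha \mapsto \sim^{ℳ}_\sigma]}$ is then immediate. The argument is a routine structural induction on $\tau$: the variable cases $\tau = \alpha$ and $\tau = \beta$ with $\beta \neq \alpha$ just unfold the clause $\sim^{ℳ}_\alpha = ℳ_0(\alpha)$ and the definition of the model extension $ℳ[\alpha \mapsto R]$; the arrow and quantifier cases apply the induction hypothesis inside the defining clauses for $\sim^{ℳ}_{\sigma \to \tau}$ and $\sim^{ℳ}_{\forall \beta, \tau}$, the only care being to $\alpha$-rename the bound variable $\beta$ away from $\alpha$ so that the two independent model extensions commute.

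For (2) the key observation is that the interpretation $⟦t⟧_ℳ$ of a first-order term erases all type information and substitutes only the first-order component $ℳ_1$ of the model for the free variables. Consequently the type substitution $t[\tau/\alpha]$, which touches solely the erased annotations occurring in $\lambda x{:}\sigma.t$, in $(t\,\sigma)$ and in $\Lambda \alpha.t$, leaves the erasure untouched, so $⟦t[\tau/\alpha]⟧_ℳ = ⟦t⟧_ℳ$; and since altering the type-variable map $ℳ_0$ cannot change a term erasure either, $⟦t⟧_ℳ = ⟦t⟧_{ℳ[\alpha \mapsto \sim^{ℳ}_\tau]}$. The two remarks combine to give (2) at once.

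Statement (3) is the genuine compositional substitution lemma, proved by induction on $t$. The base cases $t = x$ and $t = y$ with $y \neq x$ read off the extended map $ℳ_1[x \mapsto ⟦s⟧_ℳ]$; the application and type-application cases propagate the induction hypothesis through the (erased) outer constructor; and the two binding cases $t = \lambda y{:}\sigma.t_1$ and $t = \Lambda \alpha.t_1$ are handled after $\alpha$-renaming the bound variable to avoid capture in $s$, so that $[s/x]$ commutes with descending under the binder. This is the step I expect to demand the most care: one must check that the \emph{syntactic} substitution $[s/x]$ and the \emph{semantic} operation ``replace free variables by their $ℳ$-interpretations'' interact correctly in the presence of binders. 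The content is standard, but this is precisely where a bookkeeping error would hide.

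Finally (4) follows from (2) and (3) together with the fact that the carrier $\Pgr$ consists of pure $\lambda$-terms already taken modulo $\beta$. It suffices to verify one reduction step at each of the two redex shapes and then close under the congruence and equivalence rules. A term redex $(\lambda x{:}\sigma.t_1)\,t_2 >_\beta t_1[t_2/x]$ erases to $(\lambda x.⟦t_1⟧_ℳ)\,⟦t_2⟧_ℳ$, which $\beta$-reduces to $⟦t_1⟧_ℳ[⟦t_2⟧_ℳ/x]$; by (3) this equals $⟦t_1[t_2/x]⟧_ℳ$, so the two sides agree in $\Pgr$. A type redex $(\Lambda \alpha.t_1)\,\sigma >_\beta t_1[\sigma/\alpha]$ erases on the left simply to $⟦t_1⟧_ℳ$ and on the right to $⟦t_1[\sigma/\alpha]⟧_ℳ$, and these coincide by (2). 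Closing under reflexivity, symmetry, transitivity and congruence then yields the full statement for $\equiv_\beta$.
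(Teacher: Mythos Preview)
Your proposal is correct and is precisely the standard argument one expects here; the paper itself omits the proof entirely, treating the lemma as routine and moving directly to the adequacy lemma. Your structural inductions for (1) and (3), your observation for (2) that the term interpretation depends only on $ℳ_1$ and on the type-erased skeleton of $t$, and your reduction of (4) to one step at each redex shape using (2), (3) and the fact that $\Pgr$ is already a quotient by $\beta$, are exactly the details the paper suppresses.
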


\noindent And then we can deduce an adequacy lemma about well-typed terms. 

\begin{lem}
  If we have $Γ \ok t : τ$ and $ℳ$ a $Γ$-model, then $⟦t⟧_ℳ ∈ ⟦τ⟧_ℳ$.
\end{lem}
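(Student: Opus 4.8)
The plan is to prove this adequacy lemma by induction on the derivation of the typing judgment $Γ \ok t : τ$. The statement says that whenever a first-order term $t$ is well-typed with type $τ$ in context $Γ$, its interpretation $⟦t⟧_ℳ$ lands in the domain $⟦τ⟧_ℳ$ of the partial equivalence relation $∼^ℳ_τ$. Since $⟦t⟧_ℳ ∈ ⟦τ⟧_ℳ$ is by definition equivalent to $⟦t⟧_ℳ ∼^ℳ_τ ⟦t⟧_ℳ$, I would actually prove the slightly stronger reflexivity statement and treat the membership claim as a special case; this strengthening is what makes the induction go through, since the PER for arrow types is defined by a relational clause rather than a unary predicate.

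\emph{First} I would dispatch the variable cases. For a judgment of the form $Γ, a:A \ok a:A$, the interpretation $⟦a⟧_ℳ$ is simply $ℳ_1(a)$, and the definition of a $Γ$-model guarantees that when the context is extended by $x:τ$ the assigned value $t$ already satisfies $t ∈ ⟦τ⟧_ℳ$; so reflexivity holds immediately. The weakening case $Γ, a:A \ok b:B$ with $a \neq b$ follows from the induction hypothesis together with the observation that extending the model by an assignment to $a$ does not affect $⟦b⟧$ or $∼^ℳ_B$, since $b$ and the free variables of $B$ do not mention $a$ (here I would invoke the substitution lemmas, or rather their weakening consequences).

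\emph{Next} come the four structural cases for the constructors of first-order terms. For abstraction $Γ \ok (λx:τ.t):τ→σ$, I must show $⟦λx:τ.t⟧_ℳ ∼^ℳ_{τ→σ} ⟦λx:τ.t⟧_ℳ$, which unfolds to: for all $s_1 ∼^ℳ_τ s_2$, the applications are related at $σ$. Because type information is erased, $⟦λx:τ.t⟧_ℳ$ is essentially $λx.⟦t⟧_ℳ$, so applying it to $s_i$ $β$-reduces to $⟦t⟧_{ℳ[x↦s_i]}$ by the substitution lemma (item 3, together with the $β$-invariance of item 4); the induction hypothesis on $Γ,x:τ \ok t:σ$, applied in the two models $ℳ[x↦s_1]$ and $ℳ[x↦s_2]$, and the fact that $∼^ℳ_σ$ is a genuine PER whose relating of the two interpretations follows from a relational sharpening of the hypothesis, gives the result. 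Application $Γ \ok (f\,a):σ$ is the easy dual: the induction hypotheses give $⟦f⟧_ℳ ∼^ℳ_{τ→σ} ⟦f⟧_ℳ$ and $⟦a⟧_ℳ ∼^ℳ_τ ⟦a⟧_ℳ$, and the arrow clause of the PER definition directly yields $⟦f\,a⟧_ℳ ∈ ⟦σ⟧_ℳ$. The type-abstraction case $Γ \ok (Λα.t):∀α,τ$ uses that $∼^ℳ_{∀α,τ}$ is the intersection over all PERs $R$ of $∼^{ℳ[α↦R]}_τ$: since type application erases to the identity, I reduce to showing $⟦t⟧_{ℳ[α↦R]} ∈ ⟦τ⟧_{ℳ[α↦R]}$ for every $R$, which is exactly the induction hypothesis applied in each extended model. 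The type-application case $Γ \ok (f\,τ):σ[τ/α]$ then combines the intersection membership from the induction hypothesis with substitution lemma item 1, instantiating the bound PER at $R = {∼^ℳ_τ}$.

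The main obstacle I anticipate is that the bare membership statement $⟦t⟧_ℳ ∈ ⟦τ⟧_ℳ$ is too weak to be directly inductive at the abstraction case, because the arrow PER demands relating the interpretations of $t$ computed in two \emph{different} extended models $ℳ[x↦s_1]$ and $ℳ[x↦s_2]$ coming from PER-related inputs $s_1 ∼^ℳ_τ s_2$. Closing this gap cleanly requires the relational strengthening of the whole lemma — namely that PER-related assignments to the free variables force $⟦t⟧$ to be related at $τ$ — so the genuinely careful bookkeeping lies in formulating and maintaining this stronger invariant throughout the induction, the rest being a routine unfolding of definitions supported by the substitution lemmas.
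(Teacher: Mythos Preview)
The paper actually omits the proof of this lemma entirely: it simply states the result after the substitution lemmas with the phrase ``And then we can deduce an adequacy lemma,'' leaving the argument implicit. Your sketch is correct and is the standard argument for PER models; in particular you correctly identify the one non-trivial point, namely that the unary statement $⟦t⟧_ℳ ∈ ⟦τ⟧_ℳ$ is too weak to close the induction at the $λ$-abstraction rule and must be strengthened to the relational form (related environments yield $∼_τ$-related interpretations). Since there is no proof in the paper to compare against, your proposal is strictly more detailed than what the paper offers, and it makes explicit precisely the inductive invariant that the paper leaves to the reader.
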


Now we can define the notion of satisfiability in a model recursively on 
formulas' structure. 
\begin{defn}
Let $P$ be a formula such that $Γ \ok P:\Prop$ and $ℳ$ be a
$Γ$-model.

\begin{itemize}
  \item $ℳ ⊧ X\,t_1\,...\,t_n$ iff $(⟦t_1⟧_ℳ, ...,⟦t_n⟧_ℳ) ∈ ℳ(X)$, 
  \item $ℳ ⊧ P ⊸ Q$ iff $ℳ ⊧ P$ implies $ℳ ⊧ Q$, 
  \item $ℳ ⊧ ∀ X:[τ_1, ...,τ_n], P$ iff for all
     $E ⊆ ⟦τ_1⟧_ℳ \times...\times⟦τ_n⟧_ℳ$ satisfying the stability 
     condition, $ℳ[X ↦ E]⊧ P$, 
  \item $ℳ ⊧ ∀ x:τ, P$ iff for all $t ∈ ⟦τ⟧_M$, 
     $ℳ[x↦ t] ⊧ P$,
  \item $ℳ ⊧ ∀ α, P$ iff for all PER $R$ on $\Pgr$, 
     $ℳ[α ↦ R] ⊧ P$,
  \item $ℳ ⊧ ! P$ iff $ℳ  ⊧ P$.
\end{itemize}

If $E$ is a set of formulas well-formed in $Γ$, for all $Γ$-model $ℳ $, we 
write $ℳ ⊧ E$ for meaning that $ℳ ⊧ Q$ for all $Q ∈ E$. And if $T$ is another
set of formulas well-formed in $Γ$, we write $T ⊧_Γ E$ if for all
$Γ$-model $ℳ$, $ℳ ⊧T$ implies $ℳ ⊧ E$ (and we write $T ⊧_Γ P$ in
place of $T ⊧_Γ \{P\}$).  
\end{defn}

\begin{lem}
The formulas are unable to distinguish extensionally equivalent 
programs: for any formula $P$ such that $Γ, x₁:τ₁, ...,x_n:τ_n \ok P : \Prop$
and any $Γ$-model $ℳ$ the set 
$$\{ (t_1,...,t_n) ∈ ⟦τ₁⟧×...×⟦τ_n⟧|
      ℳ[x_1↦ t_1, ...,x_n↦ t_n]⊧ P\}$$
satisfies the stability condition. 
\end{lem}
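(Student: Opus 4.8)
The plan is to prove the statement by induction on the structure of the formula $P$, after first establishing the analogous congruence property at the level of first-order terms, which is where the extensionality of the interpretation really enters. Throughout I will use that $∼^ℳ_σ$ depends only on the type-variable component $ℳ_0$ of the model, so that reassigning first-order or second-order variables leaves every relation $∼^ℳ_σ$ and every domain $⟦σ⟧_ℳ$ unchanged; the notation is therefore unambiguous along the induction.

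First I would record the term-level lemma: if $Γ, x_1:τ_1, \ldots, x_n:τ_n \ok t : σ$ and $t_i ∼^ℳ_{τ_i} t_i'$ for every $i$, then $⟦t⟧_{ℳ[x_1↦t_1,\ldots,x_n↦t_n]} ∼^ℳ_σ ⟦t⟧_{ℳ[x_1↦t_1',\ldots,x_n↦t_n']}$. This is the relational strengthening of the adequacy lemma (whose diagonal instance $⟦t⟧_ℳ ∈ ⟦σ⟧_ℳ$ it recovers), and I would prove it by the same induction on the typing derivation of $t$. The variable case is either the hypothesis $t_i ∼^ℳ_{τ_i} t_i'$ or, for a variable $y$ of $Γ$, reflexivity of $∼^ℳ_σ$ at the value $ℳ_1(y)$, which lies in the domain of the relevant PER. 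The application case is immediate from the definition of $∼^ℳ_{σ→τ}$. The type-application case uses the substitution identity $∼^ℳ_{σ[τ/α]} = ∼^{ℳ[α↦∼^ℳ_τ]}_σ$ (part 1 of the substitution lemma) together with the intersection defining $∼^ℳ_{∀α,σ}$, instantiated at $R = ∼^ℳ_τ$. The two abstraction cases are handled by unfolding the definition of $∼$ at arrow- and $∀$-types: one feeds related arguments $s_1 ∼^ℳ_σ s_2$ (resp. an arbitrary PER $R$) to the two abstractions, reduces the created $β$-redexes using part 4 of the substitution lemma, and concludes by the induction hypothesis applied in the environment extended by $x↦s_1$ and $x↦s_2$ (resp. by $α↦R$).

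Then I would prove the main statement by induction on $P$, in the symmetric form
$$ℳ[x_1↦t_1,\ldots,x_n↦t_n] ⊧ P \ \Leftrightarrow\ ℳ[x_1↦t_1',\ldots,x_n↦t_n'] ⊧ P,$$
which yields the stated one-directional stability condition (the membership $(t_1',\ldots,t_n') ∈ ⟦τ_1⟧×\cdots×⟦τ_n⟧$ being automatic, since $t_i ∼^ℳ_{τ_i} t_i'$ forces $t_i'$ into the domain of the PER). The atomic case $P = X\,u_1\,\ldots\,u_m$ is the crux and the only place where the hypotheses on the model are used: here $ℳ[\vec x↦\vec t] ⊧ X\,u_1\,\ldots\,u_m$ means $(⟦u_1⟧_{\vec t},\ldots,⟦u_m⟧_{\vec t}) ∈ ℳ(X)$, the term-level lemma gives $⟦u_j⟧_{\vec t} ∼^ℳ_{σ_j} ⟦u_j⟧_{\vec t'}$ for the types $σ_j$ of the $u_j$, and the stability condition satisfied by the set $ℳ(X)$ (guaranteed for every second-order assignment in a $Γ$-model) transports the tuple. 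The implication case follows from the two induction hypotheses, and the case $!P$ is trivial since $ℳ ⊧ !P$ iff $ℳ ⊧ P$. For the three quantifier cases I would enlarge the tuple of distinguished variables and apply the induction hypothesis in the extended context: for $∀ y:σ,P$ each witness $s ∈ ⟦σ⟧_ℳ$ satisfies $s ∼^ℳ_σ s$; for $∀ X:[σ_1,\ldots,σ_m],P$ the range of admissible stable sets $E$ is identical in both enlarged models because it depends only on $ℳ_0$; and for $∀α,P$ adding $α↦R$ leaves every $∼^ℳ_{τ_i}$ untouched, as $α$ is fresh for $Γ$ and hence for the $τ_i$, so the relation between the two environments is preserved.

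The main obstacle is the term-level congruence lemma, specifically its abstraction cases, where one must pass from a statement about the bodies in extended environments to a statement about the $λ$- and $Λ$-abstractions themselves; this forces the unfolding of the PER at function and universal types and the use of $β$-invariance of the interpretation. Once that lemma is in place, the induction on $P$ is routine, with the only genuine appeal to the model's hypotheses concentrated in the atomic case.
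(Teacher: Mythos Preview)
The paper states this lemma without proof, so there is no authorial argument to compare against directly. Your approach is the natural one and is correct: the term-level congruence you isolate is exactly the relational strengthening of the adequacy lemma for terms (Lemma~7 in the paper), and once it is in hand the induction on $P$ goes through cleanly, with the atomic case discharged by the stability hypothesis on $ℳ(X)$ as you describe. The only points worth tightening are cosmetic: in the $∀x{:}τ$ and $∀X$ cases make explicit that the induction hypothesis is being invoked for the strictly smaller subformula in the extended context $Γ,x_1{:}τ_1,\ldots,x_n{:}τ_n,y{:}σ$ (resp.\ with $X{:}[σ_1,\ldots,σ_m]$ appended), so that the statement being proved is really ``for all well-formed $P$ in all extensions of $Γ$'' rather than for a fixed context; and in the term-level lemma's variable case, the reflexivity $ℳ_1(y) ∼^ℳ_σ ℳ_1(y)$ for $y$ already in $Γ$ follows because a $Γ$-model is required to send $y$ into $⟦σ⟧_ℳ$, which is the domain of the PER. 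You have both of these points implicitly, so the argument stands.
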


\begin{lem} For any $Γ$-model $ℳ$, 
\begin{enumerate} 
  \item 
    If $Γ, α : \Type \ok P : Prop$ and $Γ \ok τ:\Type$, 
    then $ℳ ⊧ P[τ/α] ⇔ ℳ [α ↦ ∼^ℳ_τ] ⊧ P$,
  \item 
    If $Γ, x : τ \ok P : Prop$ and $Γ \ok t:τ$, then  
    $ℳ ⊧ P[t/x] ⇔ ℳ [x ↦ ⟦t⟧_ℳ] ⊧ P$,
  \item 
    If $Γ, X:[τ₁,...,τ_n] \ok P : Prop$ and $Γ, x₁:τ₁, ...,x_n:τ_n\ok Q:\Prop$,
    then $$ℳ ⊧ P[Q/X\,x_1\,...,x_n] ⇔ ℳ[X↦ E]⊧ P$$ where 
  $$E = \{ (t_1,...,t_n) ∈ ⟦τ₁⟧×...×⟦τ_n⟧ |
      ℳ[x_1↦ t_1, ...,x_n↦ t_n]⊧ Q\},$$
  \item If $Γ \ok P : \Prop$, $P \equiv_{β} P'$ and $Γ \ok P' : \Prop$, then $ℳ ⊧ P ⇔ ℳ ⊧ P'$.
\end{enumerate}
\end{lem}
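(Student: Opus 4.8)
The plan is to prove all four statements simultaneously by induction on the structure of the formula $P$, mirroring exactly the recursive definition of satisfiability. Each clause of the lemma is a substitution-commutation property, and the natural strategy is a single structural induction where the four parts share the same inductive scaffolding; in practice parts (1)--(3) are proved together (they differ only in which kind of variable is being substituted) while part (4) is an easy consequence of the term-level substitution machinery already established.

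\medskip

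First I would set up the base case, where $P$ has the atomic form $X\,t_1\,\dots\,t_n$. Here the reasoning bottoms out in the definition $ℳ ⊧ X\,t_1\,\dots\,t_n$ iff $(⟦t_1⟧_ℳ,\dots,⟦t_n⟧_ℳ) ∈ ℳ(X)$, and I would invoke the term-level substitution lemma (Lemma, parts 1--3) to move substitutions past the interpretation brackets $⟦\cdot⟧$. For part (3) of the present lemma, the atomic case is exactly where the definition of $E$ is designed to fit: when the head variable is the $X$ being substituted, $P[Q/X\,x_1\,\dots\,x_n]$ unfolds to $Q[t_1/x_1]\dots[t_n/x_n]$, and one checks that $ℳ ⊧ Q[t_1/x_1]\dots[t_n/x_n]$ holds iff $(⟦t_1⟧_ℳ,\dots,⟦t_n⟧_ℳ) ∈ E$ by repeated application of part (2) of the very lemma being proved (used on $Q$, which is structurally available to the induction). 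When the head variable is some other $Y \neq X$, the substitution acts trivially and both sides reduce to membership in $ℳ(Y)$, unchanged.

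\medskip

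The inductive step then walks through each remaining formula constructor. For $P ⊸ Q$ the claim follows immediately from the induction hypotheses on $P$ and $Q$ together with the clause $ℳ ⊧ P ⊸ Q$ iff $ℳ ⊧ P$ implies $ℳ ⊧ Q$, and similarly $!P$ is transparent since $ℳ ⊧ !P$ iff $ℳ ⊧ P$. The quantifier cases $∀ α$, $∀ x:τ$, and $∀ X:[τ_1,\dots,τ_n]$ are the interesting ones: here I would unfold the satisfiability clause, apply the induction hypothesis under the extended model $ℳ[α↦R]$, $ℳ[x↦t]$, or $ℳ[X↦E']$, and then commute the two model-extensions. The one point requiring care is that substitution into a quantified formula is only sound away from variable capture, but since we work up to $α$-equivalence we may always rename the bound variable of the quantifier to be fresh for the substituted type/term/formula, so that the extension for the bound variable and the extension tracking the substitution commute. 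In the $∀ x:τ$ and $∀ X$ cases one must also check that the quantification ranges over the same sets before and after extension --- for $∀ X$ this means the stability condition is preserved, which is precisely guaranteed by the preceding Lemma (the one asserting that the set cut out by a formula satisfies the stability condition), so the range of admissible $E'$ is unaffected.

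\medskip

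I expect the main obstacle to be the bookkeeping in part (3), specifically verifying that the set $E$ is a legitimate interpretation for $X$, i.e. that it satisfies the stability condition so that $ℳ[X↦E]$ is actually a well-defined $Γ$-model. This is exactly what the immediately preceding stability lemma delivers, so the argument is available, but one must be careful to apply it to $Q$ in the correct context $Γ, x_1:τ_1,\dots,x_n:τ_n$. The secondary subtlety is coordinating the \emph{simultaneous} induction so that part (2) may be used inside the atomic case of part (3); this is harmless because part (2) is invoked only on $Q$ and its subterms, which are strictly smaller, keeping the induction well-founded. Everything else is routine unfolding and appeal to the term-level substitution lemma.
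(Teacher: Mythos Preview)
The paper states this lemma without proof (the detailed model construction is deferred to the companion work cited as \cite{AF2PTS}), so there is no in-paper argument to compare against. Your structural induction on $P$ is the standard and correct route, and identifying the preceding stability lemma as what makes $ℳ[X↦E]$ a legitimate model in part~(3) is exactly right.

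There is, however, a genuine gap in your well-foundedness justification. In the atomic case of part~(3), with $P \equiv X\,t_1\dots t_n$, you need part~(2) applied to $Q$ to conclude
\[
ℳ ⊧ Q[t_1/x_1]\dots[t_n/x_n] \;⇔\; ℳ[x_1↦⟦t_1⟧_ℳ,\dots,x_n↦⟦t_n⟧_ℳ] ⊧ Q.
\]
You claim this is licensed because ``$Q$ and its subterms\dots are strictly smaller''. They are not: $Q$ is the formula being substituted \emph{into} $P$ and is completely independent of $P$'s structure; it can be arbitrarily large while $P$ is a single atom. A single simultaneous induction on $P$ therefore does not make part~(2) for $Q$ available at this point.

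The repair is easy. Parts~(1), (2) and~(4) never mention second-order substitution, so they can be proved first by their own structural induction on $P$; part~(3) is then proved by a second induction on $P$, invoking the already-established part~(2) freely on the fixed formula $Q$. (Equivalently, order the four statements lexicographically, with part~(2) ranked before part~(3), and then by the size of $P$.) With this staging your argument goes through unchanged; only the stated reason for well-foundedness needs to be corrected.
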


\begin{lem}
 If $Γ \ok t₁ : τ$, $Γ \ok t₂:τ$ and $ℳ$ is a $Γ$-model, then 
          $ℳ ⊧ t₁ =_τ t₂$ ⇔ $⟦t₁⟧_ℳ ∼^ℳ_τ ⟦t₂⟧_ℳ $.
\end{lem}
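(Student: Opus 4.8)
The plan is to unfold both sides of the stated equivalence and, for the non-trivial direction, to instantiate the second-order quantifier with the $∼^ℳ_τ$-equivalence class of $⟦t₁⟧_ℳ$.

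First I would unfold the left-hand side. Since $t₁ =_τ t₂$ abbreviates $∀ X:[τ], X\,t₁ ⊸ X\,t₂$, the satisfaction clauses give that $ℳ ⊧ t₁ =_τ t₂$ holds iff for every $E ⊆ ⟦τ⟧_ℳ$ meeting the stability condition, $ℳ[X↦E] ⊧ X\,t₁$ implies $ℳ[X↦E] ⊧ X\,t₂$. Because a first-order term contains no second-order variable, its interpretation depends only on the first-order component of the model, so $⟦t_i⟧_{ℳ[X↦E]} = ⟦t_i⟧_ℳ$; hence by the atomic clause this condition reads: for every stable $E ⊆ ⟦τ⟧_ℳ$, if $⟦t₁⟧_ℳ ∈ E$ then $⟦t₂⟧_ℳ ∈ E$. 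Moreover the adequacy lemma for well-typed terms gives $⟦t₁⟧_ℳ, ⟦t₂⟧_ℳ ∈ ⟦τ⟧_ℳ$, so $∼^ℳ_τ$, being a PER with domain $⟦τ⟧_ℳ$, is reflexive at each of them.

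For the direction ($\Leftarrow$), I would assume $⟦t₁⟧_ℳ ∼^ℳ_τ ⟦t₂⟧_ℳ$ and let $E$ be any stable subset of $⟦τ⟧_ℳ$ with $⟦t₁⟧_ℳ ∈ E$. Applying the stability condition in the case $n=1$ to the pair $⟦t₁⟧_ℳ ∼^ℳ_τ ⟦t₂⟧_ℳ$ yields $⟦t₂⟧_ℳ ∈ E$, which is exactly what the reformulated left-hand side requires.

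For the direction ($\Rightarrow$), the crux is the choice of witness: I would instantiate the quantified $E$ with $E = \{\, t ∈ ⟦τ⟧_ℳ \mid t ∼^ℳ_τ ⟦t₁⟧_ℳ \,\}$, the $∼^ℳ_τ$-class of $⟦t₁⟧_ℳ$. Symmetry and transitivity of the PER make $E$ stable, and reflexivity at $⟦t₁⟧_ℳ$ puts $⟦t₁⟧_ℳ$ into $E$, so $E$ is an admissible instance of the universal statement. The hypothesis then forces $⟦t₂⟧_ℳ ∈ E$, i.e. $⟦t₂⟧_ℳ ∼^ℳ_τ ⟦t₁⟧_ℳ$, and one symmetry step gives $⟦t₁⟧_ℳ ∼^ℳ_τ ⟦t₂⟧_ℳ$. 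There is no serious obstacle here: once the left-hand side has been unfolded, the only point needing care is that the equivalence class chosen as $E$ genuinely satisfies the stability condition and lies in $⟦τ⟧_ℳ$, both of which are immediate from the PER axioms and adequacy.
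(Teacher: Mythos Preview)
Your proof is correct and follows essentially the same approach as the paper: both directions hinge on the stability condition, and the key step in the $(\Rightarrow)$ direction is to instantiate $E$ with the $\sim^{\mathcal M}_\tau$-class of $\llbracket t_1\rrbracket_{\mathcal M}$, exactly as you do. Your write-up is slightly more explicit (unfolding the satisfaction clauses, invoking adequacy to place $\llbracket t_i\rrbracket_{\mathcal M}$ in $\llbracket\tau\rrbracket_{\mathcal M}$), but the argument is the same.
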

\begin{proof}~\\\vspace{-1.3em}
  \begin{itemize}
      \item $ℳ ⊧ t₁ =_τ t₂ ⇒ ⟦t₁⟧_ℳ ∼^ℳ_τ ⟦t₂⟧_ℳ $ : 
        Let $E = \{ t ∈ ⟦τ⟧_ℳ; ⟦t₁⟧_ℳ ∼^ℳ_τ t \}$ be the equivalence class of $⟦t₁⟧_ℳ$ 
        (as such $E$ satisfy the stability condition). 
        If $ℳ ⊧ t₁ =_τ t₂$, then $ℳ[X ↦ E]⊧ X\,t₁ ⊸ X\,t₂$ which means that 
        $⟦t₁⟧_ℳ ∈E$ -which is true- implies $⟦t₂⟧_ℳ∈E$ which means that $⟦t₁⟧_ℳ∼^ℳ_τ⟦t₂⟧_ℳ$. 
      \item  $⟦t₁⟧_ℳ ∼^ℳ_τ ⟦t₂⟧_ℳ ⇒ ℳ ⊧ t₁ =_τ t₂$ :
        Suppose $⟦t₁⟧_ℳ ∼^ℳ_τ ⟦t₂⟧_ℳ$, then for all $E ⊆ ⟦τ⟧_ℳ$ satisfying the stability condition, 
        we have $⟦t₁⟧_ℳ  ∈ E$ implies $⟦t₂⟧_ℳ ∈ E$ or in other words $ℳ[X ↦ E] ⊧ X\,t₁ ⊸ X\,t₂$.
        And therefore, we obtain $ℳ ⊧ t₁ =_τ t₂$. 
  \end{itemize}
\end{proof}

\begin{defn}
Suppose we have $Γ \ok P₁ : \Prop$, $Γ \ok t₁ : τ$ and $Γ \ok t₂ : τ$, we say that
  $P₁ \reduce{$t₁ = t₂$} P₂$ if there exists a formula $Q$
such that $Γ, x:τ \ok Q : \Prop$, $P₁ \equiv Q[t₁/x]$ and 
  $P₂ \equiv Q[t₂/x]$. 
\end{defn}

\begin{lem} ~\\\indent
 If $ℳ ⊧ t₁ =_τ t₂$ and $P₁ \reduce{$t₁ = t₂$} P₂$ then 
  $ℳ ⊧ P₁ ⇒ ℳ ⊧ P₂$.
\end{lem}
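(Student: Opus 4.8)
The plan is to unfold the definition of the rewriting relation and then reduce the whole statement to the stability condition established earlier. By the definition of $P₁ \reduce{$t₁ = t₂$} P₂$, there is a formula $Q$ with $Γ, x:τ \ok Q : \Prop$ such that $P₁ \equiv Q[t₁/x]$ and $P₂ \equiv Q[t₂/x]$. So it suffices to transport satisfaction of $Q$ along the PER $∼^{\mathcal{M}}_τ$ when the interpretation of $x$ is switched from $\llbracket t₁\rrbracket_{\mathcal{M}}$ to $\llbracket t₂\rrbracket_{\mathcal{M}}$.

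First I would apply the substitution lemma for satisfaction (the clause on $P[t/x]$, together with its $β$-invariance clause to absorb the $\equiv$), which yields $\mathcal{M} ⊧ P₁ ⇔ \mathcal{M}[x↦\llbracket t₁\rrbracket_{\mathcal{M}}] ⊧ Q$ and $\mathcal{M} ⊧ P₂ ⇔ \mathcal{M}[x↦\llbracket t₂\rrbracket_{\mathcal{M}}] ⊧ Q$. Introducing the set $E = \{\, t ∈ \llbracket τ\rrbracket_{\mathcal{M}} \mid \mathcal{M}[x↦t] ⊧ Q \,\}$, these two equivalences say precisely that $\mathcal{M} ⊧ P₁$ iff $\llbracket t₁\rrbracket_{\mathcal{M}} ∈ E$, and $\mathcal{M} ⊧ P₂$ iff $\llbracket t₂\rrbracket_{\mathcal{M}} ∈ E$.

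Next, from the hypothesis $\mathcal{M} ⊧ t₁ =_τ t₂$ I would invoke the characterization of Leibniz equality proved just above, giving $\llbracket t₁\rrbracket_{\mathcal{M}} ∼^{\mathcal{M}}_τ \llbracket t₂\rrbracket_{\mathcal{M}}$. The final ingredient is the stability lemma (formulas cannot distinguish extensionally equivalent programs) applied to $Q$ and the single variable $x:τ$: it says exactly that $E$ satisfies the stability condition, i.e. $t ∈ E$ and $t ∼^{\mathcal{M}}_τ t'$ imply $t' ∈ E$. Chaining these, $\mathcal{M} ⊧ P₁$ gives $\llbracket t₁\rrbracket_{\mathcal{M}} ∈ E$; stability together with $\llbracket t₁\rrbracket_{\mathcal{M}} ∼^{\mathcal{M}}_τ \llbracket t₂\rrbracket_{\mathcal{M}}$ gives $\llbracket t₂\rrbracket_{\mathcal{M}} ∈ E$; and hence $\mathcal{M} ⊧ P₂$, as required.

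The argument is essentially an assembly of three earlier results, so I do not expect a serious obstacle. The one point requiring care is to make sure $E$ is genuinely the right instance of the stability lemma — that the lemma is applied with the context extended exactly by $x:τ$ and with $Q$ (not with $P₁$ or $P₂$) — and to check that whatever relation $\equiv$ denotes in the definition of the rewriting step (syntactic/$α$ equality or $β$-conversion) is respected, the latter case being precisely what the $β$-invariance clause of the satisfaction lemma is there to handle.
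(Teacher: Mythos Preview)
Your proof is correct and follows essentially the same route as the paper: both unfold the rewriting step to obtain $Q$, introduce the same set $E = \{\,t \in \llbracket\tau\rrbracket_{\mathcal M} \mid \mathcal M[x\mapsto t] \models Q\,\}$, and rely on the substitution and stability lemmas to conclude. The only cosmetic difference is that the paper consumes the hypothesis $\mathcal M \models t_1 =_\tau t_2$ by instantiating the second-order quantifier of Leibniz equality directly with $E$ (obtaining $\llbracket t_1\rrbracket_{\mathcal M} \in E \Rightarrow \llbracket t_2\rrbracket_{\mathcal M} \in E$ in one step), whereas you first pass through the PER characterization $\llbracket t_1\rrbracket_{\mathcal M} \sim^{\mathcal M}_\tau \llbracket t_2\rrbracket_{\mathcal M}$ and then invoke stability of $E$.
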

\begin{proof}
  Suppose $P₁ \equiv Q[t₁/x]$ and $P₂ \equiv Q[t₂/x]$. 
  Let $E$ be the set $\{t ∈ ⟦τ⟧_ℳ | ℳ[x↦t]⊧Q\}$. 
  Since $ℳ ⊧ t₁ =_τ t₂$, we have that $ℳ[X↦E]⊧X\,t₁⊸X\,t₂$
  which is equivalent to $ℳ[x↦⟦t₁⟧_ℳ] ⊧ Q$ implies $ℳ[x↦⟦t₂⟧_ℳ] ⊧ Q$,
  or $ℳ ⊧ P₁$ implies $ℳ ⊧ P₂$, or $ℳ ⊧P₁ ⊸ P₂$.  
\end{proof}

\begin{thm}~\\\indent
    Theses models satisfy the extensionality principle : 
  $$ℳ  ⊧ \forall\alpha\,\beta,\forall f\, g:\alpha\rightarrow\beta,(\forall
x:\alpha,f\, x=_{\beta}g\, x)⊸ f=_{\alpha\rightarrow\beta}g.$$
\end{thm}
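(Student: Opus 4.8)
The plan is to strip the quantifiers one by one using the clauses defining $⊧$, and then reduce the statement to an elementary fact about partial equivalence relations via the preceding characterisation of Leibniz equality ($ℳ ⊧ t₁ =_τ t₂ ⇔ ⟦t₁⟧_ℳ ∼^ℳ_τ ⟦t₂⟧_ℳ$). First I would fix two arbitrary PERs $R_α$ and $R_β$ and pass to $ℳ' = ℳ[α↦R_α][β↦R_β]$, as demanded by the clause for $∀α$; then, by the clause for first-order $∀$, I would fix two programs $\hat f, \hat g ∈ ⟦α→β⟧_{ℳ'}$ and set $ℳ'' = ℳ'[f↦\hat f][g↦\hat g]$. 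It remains to prove $ℳ'' ⊧ (∀ x:α,\, f\,x =_β g\,x) ⊸ f =_{α→β} g$, so, unfolding $⊸$, I assume the antecedent and derive the consequent.

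Next I would rewrite both sides through the equality lemma. Since the interpretation of a first-order term erases types and substitutes the values of the free variables, the antecedent says that for every $t$ in the domain of $R_α$ we have $(\hat f\,t)\,R_β\,(\hat g\,t)$ (using that $∼_β$ is just $R_β$). Dually, the consequent $ℳ'' ⊧ f =_{α→β} g$ amounts, again by the equality lemma together with the arrow clause of the PER definition, to the statement that $s₁\,R_α\,s₂$ implies $(\hat f\,s₁)\,R_β\,(\hat g\,s₂)$ for all $s₁, s₂$.

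The only real step, which I also expect to be the crux, is to bridge the gap between the \emph{diagonal} hypothesis (equality of $\hat f$ and $\hat g$ at a single argument) and the \emph{off-diagonal} goal, where $s₁$ and $s₂$ need not coincide. Here I would use that $\hat f, \hat g ∈ ⟦α→β⟧_{ℳ'}$, which by definition of the arrow PER means each of them sends $R_α$-related arguments to $R_β$-related results. Given $s₁\,R_α\,s₂$, symmetry and transitivity of $R_α$ give $s₁\,R_α\,s₁$, so $s₁$ lies in the domain of $R_α$ and the antecedent applies at $t = s₁$ to give $(\hat f\,s₁)\,R_β\,(\hat g\,s₁)$; since $\hat g$ respects $R_α$, the relation $s₁\,R_α\,s₂$ also yields $(\hat g\,s₁)\,R_β\,(\hat g\,s₂)$, and transitivity of $R_β$ then delivers $(\hat f\,s₁)\,R_β\,(\hat g\,s₂)$. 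This establishes $\hat f ∼^{ℳ'}_{α→β} \hat g$, hence $ℳ'' ⊧ f =_{α→β} g$. The single-point hypothesis thus propagates to all $R_α$-related pairs precisely because the denotations of $f$ and $g$ already belong to the arrow PER, i.e. are known to respect the equivalence — which is exactly why interpreting types as PERs rather than as bare sets makes the model validate extensionality.
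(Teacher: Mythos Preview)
Your argument is correct. The organization differs slightly from the paper's: the paper factors the proof through the intermediate, ``off-diagonal'' statement
\[
\forall x\,y:\alpha,\; x =_\alpha y \;\multimap\; f\,x =_\beta g\,y,
\]
deriving it from the diagonal hypothesis by the rewriting lemma (if $ℳ\models t_1=_\tau t_2$ and $P_1\xrightarrow{t_1=t_2}P_2$ then $ℳ\models P_1\Rightarrowℳ\models P_2$), and then concluding $f=_{\alpha\to\beta}g$ via the equality characterisation. You instead drop straight to the PER level using only the equality characterisation, and supply the diagonal-to-off-diagonal step by hand, exploiting that $\hat g\in⟦\alpha\to\beta⟧_{ℳ'}$ so $\hat g$ is self-related in the arrow PER and hence preserves $R_\alpha$-related pairs. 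The two arguments have the same semantic content; the paper's version packages the crucial step as an application of a previously established lemma, while yours inlines it and makes explicit where the hypothesis $\hat f,\hat g\in⟦\alpha\to\beta⟧$ is actually used.
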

\begin{proof}
  It is a consequence of the last two lemmas. 
  \begin{itemize}
    \item The last one gives us that $$ℳ ⊧∀αβ,\forall f\, g:\alpha\rightarrow\beta,(\forall
x:\alpha,f\, x=_{\beta}g\, x)⊸ (\forall
x\,y:\alpha, x =_α y ⊸ f\, x=_{\beta}g\, y).$$
    \item Therefore we are left to prove that 
$ℳ ⊧∀αβ,∀ f\, g:α→β,(∀ x\,y:α, x =_α y ⊸ f\, x=_{\beta}g\, y) ⊸ f=_{α⊸β}g$. 
  Let $R_α$ and $R_β$ be two PER, $t₁, t₂ ∈ ⟦α→β⟧_{ℳ[α↦R_α, β↦R_β]}$. 
  Suppose $ℳ[α↦R_α, β↦R_β,f↦t₁,g↦t₂] ⊧∀ x\,y:α, x =_α y ⊸ f\, x=_{\beta}g\, y$, 
  we need to prove that, $ℳ[α↦R_α, β↦R_β,f↦t₁,g↦t₂] ⊧ f=_{α→β}g$ or equivalently
  that $t₁∼^{ℳ [α↦R_α, β↦R_β]}_{α→β}t₂$, which is also equivalent to the fact that
  for all $(a₁, a₂) ∈ R_α$, $((t₁\,\,a₁), (t₂\,\,a₂)) ∈ R_β$ which is exactly
  $ℳ[α↦R_α, β↦R_β,f↦t₁,g↦t₂] ⊧∀ x\,y:α, x =_α y ⊸ f\, x=_{\beta}g\, y$.
  \end{itemize}
\end{proof}

\subsection*{Projecting formulas toward types}

In order to write the rules of our proof system in the next section, we
are going need to have way to project second-order formulas toward types. 

\begin{defn}
Given a formula $F$, we define the type $F^-$ recursively 
built from $F$ in the following way.
\begin{equation*}
\begin{array}{lccccr}
(X\, t_1\,...\, t_n)^- \equiv α_X &
(A ⊸ B)^- \equiv A^- → B^- &
(∀α,F)^- \equiv F^- &
(∀ x:α,F)^- \equiv F^-  &
(!F)^- \equiv !F^-  \\
\end{array}
\end{equation*}
\begin{equation*}
(∀ X:[τ_1,...,τ_n],F)^- \equiv ∀α_X,F^-.
\end{equation*}
\end{defn}

\begin{lem}\label{lemma_moins}
  If $Γ \ok A:\Prop$, $Γ^\star \ok A^-:\Type$ where $Γ^\star$ is obtained
  from $Γ$ by replacing occurrences of 
      ``$X:[τ_1,...,τ_n]$" by ``$α_X : \Type$" and letting others unchanged.
\end{lem}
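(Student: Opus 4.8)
The plan is to argue by induction on the derivation of $Γ \ok A : \Prop$, which, since the $\Prop$-formation rules are syntax-directed, amounts to an induction on the shape of the formula $A$. Before the induction I would record two preliminary facts. First, whenever $Γ$ is well-formed so is $Γ^\star$, by a routine induction on the well-formedness derivation of $Γ$: the empty and $α:\Type$ clauses are immediate, and the clause $Γ,X:[τ_1,\dots,τ_n]$ turns into $Γ^\star,α_X:\Type$, which is well-formed as soon as $α_X \notin Γ^\star$. Second, a type $τ$ contains neither first-order nor second-order variables, so the judgement $Γ \ok τ : \Type$ depends only on the type-variable declarations of the context; since $Γ$ and $Γ^\star$ carry the same ones (the operation merely rewrites each $X:[\vec{τ}]$ as $α_X:\Type$ and leaves the $α:\Type$ entries alone), $Γ \ok τ : \Type$ entails $Γ^\star \ok τ : \Type$. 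This transfer is exactly what is needed to validate the $x:τ$ clause in the previous point.

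The crucial structural remark, which I would isolate first, is that $A^-$ only ever mentions type variables lying in the image of the injection $X \mapsto α_X$: by inspection of the clauses defining $(-)^-$, every atom it emits is some $α_X$, and no plain type variable $α \in \V_0$ and no first-order variable survives the projection. Adopting the harmless convention, available up to $α$-renaming, that bound type variables avoid the image of the injection, we obtain that $α \notin \FV(A^-)$ for every $α$ bound by a $∀α$, and trivially $x \notin \FV(A^-)$ for every first-order $x$. This is what will license the strengthening steps in the quantifier cases.

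Then I would run the induction. For an atom $A = X\,t_1\dots t_n$ we have $A^- \equiv α_X$ and $X:[\vec{τ}] \in Γ$, hence $α_X:\Type \in Γ^\star$; with $Γ^\star$ well-formed this gives $Γ^\star \ok α_X : \Type$. For $A = P ⊸ Q$ the induction hypotheses give $Γ^\star \ok P^- : \Type$ and $Γ^\star \ok Q^- : \Type$, and the arrow rule yields $Γ^\star \ok P^- \to Q^- : \Type$. For $A = ∀X:[\vec{τ}],F$ we have $(Γ,X:[\vec{τ}])^\star = Γ^\star,α_X:\Type$, so the induction hypothesis gives $Γ^\star,α_X:\Type \ok F^- : \Type$ and the $∀$-formation rule produces $Γ^\star \ok (∀α_X,F^-) : \Type$, which is $A^-$. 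For $A = ∀α,F$ and $A = ∀x:τ,F$ the induction hypothesis gives respectively $Γ^\star,α:\Type \ok F^- : \Type$ and $Γ^\star,x:τ \ok F^- : \Type$; since $A^- \equiv F^-$ does not depend on the freshly declared $α$ (resp. $x$) by the structural remark, strengthening prunes that last declaration and leaves $Γ^\star \ok F^- : \Type$. The $!$ case reduces immediately to the induction hypothesis.

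I expect the main obstacle to be the bookkeeping around the injection $X \mapsto α_X$ rather than any single case: one must make precise that its image can be kept disjoint from the bound plain type variables, so that the $∀α$ case strengthens soundly, and from the type-variable declarations already present, so that $Γ^\star$ stays well-formed, and both hinge on the observation that $(-)^-$ emits nothing but $α_X$'s. Everything else is either a direct application of the matching formation rule or the standard admissibility of strengthening for an unused last declaration, which is available among the pure type system properties established in~\cite{AF2PTS}.
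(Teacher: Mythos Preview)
The paper states this lemma without proof, treating it as a routine structural property. Your argument by induction on the formation derivation of $A$ is exactly the expected one, and the key technical observation you isolate---that every type variable occurring in $A^-$ lies in the image of $X \mapsto α_X$, so that the $∀α$ and $∀x:τ$ cases can be discharged by strengthening away an unused last declaration---is precisely the point that needs to be made explicit. Your remarks about the freshness bookkeeping required for $Γ^\star$ to remain well-formed (disjointness of the image of the injection from bound and declared plain type variables) are also to the point; the paper tacitly assumes these can be arranged up to $α$-renaming.

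One small caveat worth flagging: the paper's clause $(!F)^- \equiv {!F^-}$ is almost certainly a slip, since the grammar of types here is plain system~$\mathcal{F}$ with no $!$ constructor (contrast the later translation $(\cdot)^\circ$ into EAL types, which do carry $!$). Your treatment of the $!$ case as immediate is correct on the intended reading $(!F)^- \equiv F^-$, but as literally written the lemma would fail at that clause; you may want to note this explicitly.
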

\begin{example}~\\\vspace{-1.3em}
  \begin{itemize} 
     \item $(t₁ =_τ t₂)^- \equiv ∀α,α ⊸ α \equiv \one$. 
     \item $(N x)^- \equiv \left(∀X:[\nat], !(∀y, X\,y ⊸ X\,(s\,y)) ⊸ !(X\,0 ⊸ X\,x)\right)^- 
          \equiv ∀α, (α → α) → α →α \equiv \nat$, 
  \end{itemize}
\end{example}

\section{The proof system}

Sequents are of the form $Γ;Δ ⊢ \faded{t:}P$ where $Γ$ is a context (see 
definition \ref{DefContext}), $Δ$ is an unordered set of 
assignments of the form $\faded{x : }Q$ where $t$ is a first-order
term, $x$ a first-order variable and $P$ and $Q$ are formulas. 
Our proof system has two parameters:
  \begin{itemize}
    \item A well-formed typing context $\Sign$ of types of functions we want to implement. 
  In this paper, we use the set 
  $$  \begin{array}{lcclr} 
    \Sign & = & \{ & 0 : \nat, \succe : \nat → \nat, \pred : \nat → \nat, 
                     \mult : \nat → \nat → \nat, & \\
          &   &    & \minus : \nat → \nat → \nat, \plus : \nat → \nat → \nat,& \\
          &   &    & \sumb : (\nat → \nat) → \nat → \nat, \prodb : (\nat → \nat) → \nat → \nat  & \}. 
    \end{array}$$
    \item A set $\H$ of equational formulas of the form $∀ x_1 :τ₁,...,∀x_n:τ_n,t₁ =_τ t₂$ well-typed in 
          $\Sign$. In this paper, we take $\H$ to be the intersection of all sets $T$ of formulas of 
          this form such that $\H_0 ⊧_Σ T$ where $\H_0$ is the set below. 
\begin{small}
  \[
\begin{array}{cccccc}
\H_0 ⊧\{ & & 0 & =_\nat & Λα.λf:α→α.x:α.x & ,\\
 & \forall n:\nat, & s\,n & =_\nat & Λα.λf:α→α.x:α.n\,α\,f (f\, x) & ,\\
 & \forall x\, y:\nat, & \plus\,\, x\,(s\,\, y) & =_\nat & s\,\,(plus\,\, x\,\, y) & ,\\
 & \forall x:\nat, & plus\,\, x\,\,0 & =_\nat & x & ,\\
 & \forall x\, y:\nat, & \mult\,\, x\,\,(s\,\, y) & =_\nat & plus\,\, x\,\,(mult\,\, x\,\, y) & ,\\
 & \forall x:\nat, & \mult\,\, x\,\,0 & =_\nat & 0 & ,\\
 & \forall x:\nat, & \pred\,\,(s\,\, x) & =_\nat & x & ,\\
 &  & pred\,\,0 & =_\nat & 0 & ,\\
 & \forall x\, y:\nat, & \minus\,\, x\,\,(s\,\, y) & =_\nat & \pred\,\,(\minus\,\, x\,\, y) & ,\\
 & \forall x:\nat, & \minus\,\, x\,\,0 & =_\nat & x & ,\\
 & \forall x:\nat,\forall f:\nat\rightarrow\nat, & \sumb\, f\,\,(s\,\, x) & =_\nat & \plus\,\,(\sumb\,\, f\,\, x)\,(f\,\, x) & ,\\
 & \forall f:\nat\rightarrow\nat, & \sumb\,\, f\,\,0 & =_\nat & 0 & ,\\
 & \forall x:\nat,\forall f:\nat\rightarrow\nat, & \prodb\,\, f\,\,(s\,\, x) & =_\nat & \mult\,\,(\prodb\,\, f\,\, x)\,\,(f\,\, x) & ,\\
 & \forall f:\nat\rightarrow\nat, & \prodb\,\, f\,\,0 & =_\nat & s\,\,0 & \}.\end{array}\] 
\end{small}
  \end{itemize}

\begin{fmpage}{0.95\textwidth}
\begin{prooftree}
    \AxiomC{$Σ,Γ \ok P : \Prop$}
    \LeftLabel{\sc \small Axiom}
    \UnaryInfC{$Γ; \faded{x :}P ⊢ \faded{x :} P$}

    \AxiomC{$Γ ; Δ ⊢ \faded{t :} Q$}
    \AxiomC{$Σ, Γ  \ok P : \Prop$}
    \LeftLabel{$x \not∈ Δ$}
    \RightLabel{\sc \small Weakening}
    \BinaryInfC{$Γ ; Δ, \faded{x :} P ⊢ \faded{t :}Q$}
    \noLine
    \BinaryInfC{}
\end{prooftree}\begin{prooftree}
    \AxiomC{$Γ ; Δ₁ ⊢\faded{t₁ :} P ⊸ Q$}
    \AxiomC{$Γ ; Δ₂ ⊢\faded{t₂ :} P$}
    \RightLabel{\sc \small Application}
    \BinaryInfC{$Γ ; Δ₁,Δ₂ ⊢ \faded{(t₁\,t₂):} Q$}

    \AxiomC{$Γ ; Δ, \faded{x :} P ⊢ \faded{t :} Q$}
    \RightLabel{\sc \small Abstraction}
    \UnaryInfC{$Γ ; Δ ⊢ \faded{λx:P^-.t:} P ⊸ Q$}
    \noLine
    \BinaryInfC{}
\end{prooftree}
\begin{prooftree}
   \AxiomC{$Γ ; Δ₁ ⊢ \faded{t₁ :} !P₁\qquad$ ... $\qquadΓ ; Δ_n ⊢ \faded{t_n :} !P_n$}
    \AxiomC{$Γ ; \faded{x₁ :} P₁, ..., \faded{x_n :} P_n ⊢ t : P$}
    \RightLabel{\sc \small Promotion}
    \BinaryInfC{$Γ ; Δ₁,...,Δ_n ⊢ \faded{t[t₁/x₁,...,t_n/x_n] :}!P$}
\end{prooftree}
\begin{prooftree}
    \AxiomC{$Γ ; Δ, \faded{x :}!P,\faded{x :}!P ⊢ \faded{t :} Q$}
    \RightLabel{\sc \small Contraction}
    \UnaryInfC{$Γ ; Δ,\faded{x :}!P ⊢ \faded{t :} Q$}
    
    \AxiomC{$Γ, α : \Type ; Δ ⊢ \faded{t :} P$}
    \RightLabel{\sc \small$∀_α$-Intro}
    \UnaryInfC{$Γ ; Δ ⊢ \faded{t :} ∀ α,P$}
    
    \noLine
    \BinaryInfC{}
\end{prooftree}\begin{prooftree}
    \AxiomC{$Γ, x:τ ; Δ ⊢ \faded{t :} P$}
    \RightLabel{\sc \small $∀_1$-Intro}
    \UnaryInfC{$Γ ; Δ ⊢ \faded{t :} ∀ x:τ,P$}

    \AxiomC{\!$Γ, X:[τ_1,...,τ_n] ; Δ ⊢ \faded{t :} P$}
    \RightLabel{\sc \small$∀_2$-Intro}
    \UnaryInfC{$Γ ; Δ ⊢ \faded{(Λα_X.t) :} ∀ X:[τ_1,...,τ_n],P$}
    \noLine
    \BinaryInfC{}
\end{prooftree}
\begin{prooftree}    
    \AxiomC{$Γ ; Δ ⊢ \faded{t :} ∀ α,P$}
    \AxiomC{$Σ,Γ \ok τ : \Type$}
    \RightLabel{\sc \small $∀_α$-Elim}
    \BinaryInfC{$Γ ; Δ ⊢ \faded{t :} P[τ/α]$}

    \AxiomC{$Γ ; Δ ⊢ \faded{t :} ∀ x:τ,P $}
    \AxiomC{$Σ,Γ \ok a : τ$}
    \RightLabel{\sc \small $∀_1$-Elim}
    \BinaryInfC{$Γ ; Δ ⊢ \faded{t :} P[a/x]$}
    \noLine
    \BinaryInfC{}
\end{prooftree}
\begin{prooftree}
    \AxiomC{$Γ ; Δ ⊢ \faded{t :} ∀ X:[τ_1,...,τ_n],P$}
    \AxiomC{$Σ,Γ, x_1:τ_1, ..., x_n : τ_n \ok Q : Prop$}
    \RightLabel{\sc \small $∀_2$-Elim}
    \BinaryInfC{$Γ ; Δ ⊢ \faded{(t\, Q^-) :} P[Q/X\,x_1...x_n]$}
\end{prooftree}
\begin{prooftree}
    \AxiomC{$Γ;Δ ⊢ \faded{t :} P₁$}
    \LeftLabel{$\H ⊧_{Σ,Γ} t₁ =_τ t₂$ and  $P₁\reduce{$t₁ =_τ t₂$}P₂$}
    \RightLabel{\sc \small Equality}
    \UnaryInfC{$Γ;Δ ⊢ \faded{t :}P₂$}
\end{prooftree}

\begin{center}
\textbf{The proof system parametrized by $Σ$ and $\H$}
\end{center}
\end{fmpage}

The following lemma gives us the type of proof-terms. 
\begin{lem}
  If $Γ;x₁:P₁,...,x_n:P_n ⊢ t : P$, then $Γ^\star,x₁:P₁^-, ...,x_n:P^-_n \ok t : P^-$.
\end{lem}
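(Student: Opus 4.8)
The plan is to argue by structural induction on the derivation of $Γ;Δ ⊢ t:P$, producing in each case the corresponding well-formedness derivation $Γ^\star,Δ^- \ok t:P^-$ (under the ambient signature $Σ$, which $(\cdot)^\star$ leaves unchanged, since it consists only of first-order declarations); here I abbreviate $Δ^- = x_1:P_1^-,\dots,x_n:P_n^-$. Before the induction I would isolate the one structural invariant that drives everything: the free variables of a projected type $A^-$ are always among the images $α_X$ of the free second-order variables of $A$, because atoms $X\,t_1\cdots t_n$ project to $α_X$ while every first-order term and every plain type annotation is discarded by $(\cdot)^-$. From this invariant I would extract, by a routine induction on the formula, the three substitution identities I need: $(P[a/x])^- \equiv P^-$ and $(P[τ/α])^- \equiv P^-$ (first-order substitutions and plain type substitutions are invisible to the projection), and the informative one $(P[Q/X\,x_1\cdots x_n])^- \equiv P^-[Q^-/α_X]$ (second-order substitution becomes ordinary type substitution).

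With these in hand the logical cases are direct. For Axiom and Weakening I would use the start and weakening rules of the well-formedness system, discharging the side condition $P^-:\Type$ via Lemma~\ref{lemma_moins} applied to the premise $Σ,Γ \ok P:\Prop$. Application and Abstraction go through because $(P⊸Q)^- \equiv P^- → Q^-$, so they map precisely onto arrow-elimination and arrow-introduction, the latter producing exactly the annotated abstraction $λx:P^-.t$. The two second-order rules are where the projection does real work: $∀_2$-Intro inserts $Λα_X$ and targets $(∀X:[τ_1,\dots,τ_n],P)^- \equiv ∀α_X,P^-$, matching type abstraction, while $∀_2$-Elim applies the term to $Q^-$ and lands in $P^-[Q^-/α_X]$, which is exactly $(P[Q/X\,x_1\cdots x_n])^-$ by the commutation identity above; the side condition $Σ,Γ,x_1:τ_1,\dots,x_n:τ_n \ok Q:\Prop$ yields $Γ^\star \ok Q^-:\Type$ through Lemma~\ref{lemma_moins} and strengthening, since $Q^-$ contains no free $x_i$.

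By contrast the first-order and plain type quantifier rules project to the identity on both the term and the type. Since $(∀α,P)^- \equiv P^-$ and $(∀x:τ,P)^- \equiv P^-$, and since by the first two substitution identities neither $α$ nor $x$ occurs free in $P^-$—nor, by the same analysis of how type annotations and type applications ever enter proof terms, in $t$—the introduction rules reduce to strengthening away the now-unused context entry $α:\Type$ or $x:τ$, and the elimination rules leave the judgment literally unchanged. Promotion projects to the iterated substitution $t[t_1/x_1,\dots,t_n/x_n]$, handled by the substitution lemma of the underlying PTS, and Contraction to an ordinary contraction, both legitimate because the well-formedness system is non-linear and the exponential contributes nothing at the level of terms. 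Finally the Equality rule is immediate: if $P_1 \reduce{$t_1 = t_2$} P_2$ then $P_1$ and $P_2$ differ only by a first-order substitution, so $P_1^- \equiv P_2^-$, and as the proof term is unchanged the induction hypothesis already delivers the conclusion.

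I expect the genuine obstacle to be twofold: establishing the second-order substitution identity $(P[Q/X\,x_1\cdots x_n])^- \equiv P^-[Q^-/α_X]$ cleanly, and verifying the free-variable invariant—that no projected type and no projected proof term ever contains a free first-order variable or a free plain type variable. It is precisely this invariant that licenses the strengthening steps in the first-order and type quantifier cases and the erasure underlying the Equality rule; once it is in place, the remainder is bookkeeping against the rules of the well-formedness system.
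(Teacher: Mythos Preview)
The paper states this lemma without proof, so there is nothing to compare against; your structural induction on the proof-system derivation is the natural argument and is essentially correct. The three substitution identities you isolate---$(P[a/x])^- \equiv P^-$, $(P[\tau/\alpha])^- \equiv P^-$, and $(P[Q/X\,x_1\cdots x_n])^- \equiv P^-[Q^-/\alpha_X]$---together with the free-variable invariant that $P^-$ and proof terms contain only the distinguished type variables $\alpha_X$, are exactly the facts that make each rule project cleanly onto the well-formedness system.

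Two minor remarks. First, the paper's clause $(!F)^- \equiv {!F^-}$ is almost certainly a typo for $(!F)^- \equiv F^-$ (the target grammar of system~$\mathcal{F}$ types has no $!$, and the worked example $(N\,x)^- \equiv \mathtt{nat}$ confirms this); your treatment of Promotion and Contraction implicitly relies on this reading, which is the only sensible one. Second, in the Contraction case the induction hypothesis literally produces a well-formedness context with a repeated variable declaration $x:(!P)^-, x:(!P)^-$, which is not well-formed under the rules given; the cleanest fix is to phrase the lemma for the underlying \emph{set} of hypotheses in $\Delta$ (duplicates removed), after which Contraction becomes vacuous. This is routine bookkeeping rather than a gap in the argument.
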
 

\noindent
And this one tells us that our proof system is well-behaved with respect to our
notion of model.
\begin{lem} \label{adequation_model}(Adequacy lemma)\\\indent
  If $Γ;x₁:P₁, ...,x_n:P_n ⊢ t : P$, then 
  $\H \cup \{P₁, ..., P_n\} ⊧_Γ P$. 
\end{lem}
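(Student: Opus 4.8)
The plan is to argue by induction on the derivation of $Γ; x₁:P₁, ...,x_n:P_n ⊢ t : P$, in each case unfolding the definition of $⊧_Γ$: one fixes an arbitrary $Γ$-model $ℳ$ with $ℳ ⊧ \H$ and $ℳ ⊧ P_i$ for every $i$, and proves $ℳ ⊧ P$. Several families of rules are immediate. The \textsc{Axiom} rule reduces to $ℳ ⊧ P ⇒ ℳ ⊧ P$, and \textsc{Weakening} only enlarges the set of hypotheses, so the entailment is preserved. Since satisfaction reads $!$ transparently ($ℳ ⊧ !P$ iff $ℳ ⊧ P$), the \textsc{Promotion} and \textsc{Contraction} rules carry no semantic content: they follow by composing the inductive hypotheses and because a set of hypotheses absorbs duplicates. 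The \textsc{Application} and \textsc{Abstraction} rules are exactly the two directions of the clause ``$ℳ ⊧ P ⊸ Q$ iff ($ℳ ⊧ P$ implies $ℳ ⊧ Q$)'': for \textsc{Abstraction} one feeds $ℳ ⊧ P$ into the inductive hypothesis for the premise, while for \textsc{Application} one combines $ℳ ⊧ P ⊸ Q$ and $ℳ ⊧ P$, coming from the two premises, to obtain $ℳ ⊧ Q$.

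The introduction rules for the three quantifiers all follow the same pattern, which relies on a weakening property of satisfaction: a formula well-formed in $Γ$, hence not mentioning a variable fresh for $Γ$, keeps its truth value when $ℳ$ is extended at that variable. Concretely, for $∀_α$-\textsc{Intro} one takes an arbitrary PER $R$ and forms the $(Γ, α{:}\Type)$-model $ℳ[α ↦ R]$; as $\H$ and the $P_i$ do not mention $α$, the model $ℳ[α ↦ R]$ still satisfies every hypothesis, so the inductive hypothesis yields $ℳ[α ↦ R] ⊧ P$, and ranging over all $R$ gives $ℳ ⊧ ∀α, P$ by definition. The cases $∀_1$-\textsc{Intro} and $∀_2$-\textsc{Intro} are identical, the latter quantifying over all sets $E$ satisfying the stability condition, exactly as in the satisfaction clause.

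The elimination rules are where the substitution lemmas do the work. For $∀_α$-\textsc{Elim} the inductive hypothesis gives $ℳ[α ↦ R] ⊧ P$ for every PER $R$; instantiating at $R = ∼^ℳ_τ$ and invoking the substitution lemma $ℳ ⊧ P[τ/α] ⇔ ℳ[α ↦ ∼^ℳ_τ] ⊧ P$ yields $ℳ ⊧ P[τ/α]$. The rules $∀_1$-\textsc{Elim} and $∀_2$-\textsc{Elim} proceed in the same way through the corresponding clauses of that lemma; for $∀_2$-\textsc{Elim} the instantiating set $E$ built from $Q$ is admissible precisely because of the stability lemma, so it is a legitimate witness for the universally quantified premise. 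Finally, the \textsc{Equality} rule is handled by the rewriting lemma: from $ℳ ⊧ \H$ together with the side condition $\H ⊧_{Σ,Γ} t₁ =_τ t₂$ we get $ℳ ⊧ t₁ =_τ t₂$, and since $P₁ \reduce{$t₁ =_τ t₂$} P₂$, the rewriting lemma turns $ℳ ⊧ P₁$ (from the inductive hypothesis) into $ℳ ⊧ P₂$.

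The main obstacle is not any single computation but getting the quantifier cases to interlock cleanly: the introduction rules need the weakening-under-extension property stated above, while the $∀_2$ rules additionally rely on the stability lemma to guarantee that the only sets one ever quantifies over, or instantiates with, are those closed under extensional equivalence. Keeping this admissibility condition on second-order witnesses synchronized between the satisfaction clause for $∀ X$ and the set $E$ produced in $∀_2$-\textsc{Elim} is the delicate point; once it is in place, every case is a direct unfolding of the relevant lemma.
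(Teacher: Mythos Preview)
Your proposal is correct and follows precisely the approach the paper indicates: induction on the derivation together with the substitution lemmas. The paper's own proof is a single sentence (``an induction on the structure of the proof \ldots\ and an intensive use of substitution lemmas''), so you have simply filled in the case analysis that the paper leaves implicit.
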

\begin{proof}
  The proof consists of an induction on the structure of the proof 
  $Γ;x₁:P₁, ...,x_n:P_n ⊢ t : P$ and an intensive use of substitution lemmas. 
\end{proof}

\subsection*{A simple realizability theory}

\begin{defn}
Given a formula $F$ and a term $t$, we can recursively define the formula
written $t ⊩ F$ upon the structure of $F$ in the following way.
\begin{itemize}
\item $t⊩ X\, t_{1}\,...\, t_{n} \equiv X\, t_{1}\,...\, t_{n}\,t$,
\item $t⊩ P ⊸ Q \equiv ∀ x:P^{-},x⊩ P\,⊸\,(t\, x)⊩ Q$,
\item $t⊩∀ X:[τ_{1},...,τ_{n}],P \equiv ∀α_{X},∀ X:[τ_{1},...,τ_{n},α_{X}],t\,α_{X}⊩ P$,
\item $t⊩∀ x:τ,P \equiv ∀ x:τ,t⊩ P$,
\item $t⊩∀α,P \equiv ∀α,t⊩ P$,
\item $t⊩!P \equiv !(t⊩ P)$.
\end{itemize}
\end{defn}

\begin{lem}
For any formula $P$ and any context $Γ$ and any first-order term $t$,
\begin{align*}
\left.\begin{aligned}
Γ  \ok\,  P &: \Prop  \\
Γ^\star \ok\,  t &: P^-
\end{aligned} \right\} ⇒ Γ^- \ok (t ⊩ P) : \Prop
\end{align*}
where $Γ^-$ is obtained from $Γ$ by replacing each occurrence of
``$X:[τ_1, ...,τ_n]$"
by ``$α_X : \Type, X : [τ_1, ..., τ_n, α_X]$" (and $Γ^\star ⊆ Γ^-$ as in 
lemma \ref{lemma_moins}).
\end{lem}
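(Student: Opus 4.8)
The plan is to prove the statement by induction on the structure of the formula $P$, but with the induction hypothesis generalised over an auxiliary context $Θ$ consisting only of type-variable and first-order-variable declarations that are well-formed over $Γ^\star$. Concretely I would show: whenever $Γ \ok P : \Prop$ and $Γ^\star, Θ \ok t : P^-$, then $Γ^-, Θ \ok (t ⊩ P) : \Prop$. This extra $Θ$ is what lets me recurse underneath the binders that $⊩$ introduces, and the original statement is recovered by taking $Θ$ empty. Throughout I would lean on Lemma~\ref{lemma_moins}, which supplies $Γ^\star \ok P^- : \Type$ and records that $Γ^\star ⊆ Γ^-$, so that any judgement derivable over $Γ^\star$ weakens to $Γ^-$, and to $Γ^-, Θ$.

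The conceptual heart is the atomic case $P = X\,t_1\,...\,t_n$, where $t ⊩ P = X\,t_1\,...\,t_n\,t$ and $P^- = α_X$. Here the point of the augmentation in $Γ^-$ becomes visible: whereas $Γ$ carries $X:[τ_1,...,τ_n]$, the context $Γ^-$ carries $X:[τ_1,...,τ_n,α_X]$, precisely so that $X$ may absorb the realizer as one extra argument. From $Γ \ok X\,t_1\,...\,t_n : \Prop$ I read off $Γ \ok t_i : τ_i$; since the $t_i$ are first-order terms typed in $Γ$ they cannot mention the fresh $α_X$, so these judgements weaken to $Γ^-, Θ \ok t_i : τ_i$. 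The hypothesis $Γ^\star, Θ \ok t : α_X$ weakens to $Γ^-, Θ \ok t : α_X$, and the atomic formation rule then yields $Γ^-, Θ \ok X\,t_1\,...\,t_n\,t : \Prop$.

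For the remaining cases I would unfold the definition of $⊩$ and apply the induction hypothesis to the immediate subformulas, choosing realizers and extending $Θ$ or $Γ$ as each clause dictates. In the arrow case $t ⊩ (A ⊸ B) = ∀x:A^-,(x ⊩ A) ⊸ (t\,x) ⊩ B$: Lemma~\ref{lemma_moins} makes $A^- : \Type$ well-formed over $Γ^\star, Θ$, so I may legitimately extend $Θ$ with a fresh $x:A^-$; applying the hypothesis to $A$ with realizer $x$ and to $B$ with realizer $t\,x$ (both over $Θ,x:A^-$) gives the two subformulas in $\Prop$, which I reassemble with the formation rules for $⊸$ and for the first-order $∀$. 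In the second-order case $t ⊩ (∀X:[τ_1,...,τ_n],Q) = ∀α_X,∀X:[τ_1,...,τ_n,α_X],(t\,α_X) ⊩ Q$, I would extend $Γ$ (not $Θ$) by $X:[τ_1,...,τ_n]$ and recurse on $Q$ with realizer $t\,α_X$: the transformation $(\cdot)^-$ turns that declaration into exactly $α_X:\Type, X:[τ_1,...,τ_n,α_X]$, matching the quantifier prefix, while $(Γ,X:[τ_1,...,τ_n])^\star = Γ^\star,α_X:\Type$ types $t\,α_X : Q^-$. The two remaining quantifier cases extend the context by $x:τ$ respectively $α:\Type$ and keep the same realizer $t$, and the exponential case is immediate from the $!$-formation rule, reading the hypothesis on $t$ through the projection of $!$ onto its $\mathcal{F}$-type.

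The step I expect to be the real obstacle is this context bookkeeping rather than any single typing derivation: the realizer variables introduced in the arrow case have projected types involving the fresh variables $α_X$, which live in $Γ^\star$ and $Γ^-$ but not in $Γ$, so they cannot be pushed back into $Γ$ and must be carried in the separate context $Θ$; equivalently, the induction has to be phrased so that weakening along $Γ^\star ⊆ Γ^-$ and under these accumulated hypotheses is available at every node. A secondary point to get right is the exponential clause, where one must ensure that the hypothesis $Γ^\star,Θ \ok t : (!Q)^-$ transfers to the subformula $Q$ as a genuine $\mathcal{F}$-type judgement, consistent with the projections computed in the preceding examples.
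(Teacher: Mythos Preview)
The paper states this lemma without proof, so there is nothing to compare against directly. Your plan---structural induction on $P$ with the hypothesis generalised over an auxiliary context $\Theta$ of type- and first-order declarations---is the natural argument and is essentially correct; in particular your analysis of the atomic case pinpoints exactly why $\Gamma^-$ augments the arity of each $X$ by the realizer slot $\alpha_X$, and your handling of the arrow case is the reason the generalisation over $\Theta$ is needed at all.

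Two minor remarks. First, in the quantifier cases you will need to commute the freshly bound declaration past $\Theta$ (e.g.\ in the second-order case you derive the realizer typing in $\Gamma^\star,\Theta,\alpha_X{:}\Type$ but must feed the induction hypothesis a context of the shape $(\Gamma,X{:}[\vec\tau])^\star,\Theta' = \Gamma^\star,\alpha_X{:}\Type,\Theta'$); this is harmless since the new binder is fresh for $\Theta$, but it does require an exchange lemma for $\ok$, which you should state once and use silently. Second, your caution about the exponential clause is well placed: the clause $(!F)^- \equiv {!}F^-$ printed in the paper is a typo (system~$\mathcal F$ types have no ${!}$), and the worked example $(N\,x)^- \equiv \nat$ confirms the intended reading $(!F)^- \equiv F^-$, under which the $!$-case of your induction is immediate.
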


\begin{lem} \label{adequation_realizer} (Adequacy lemma for realizers)\\\indent
If $Γ;x₁:P₁,...,x_n:P_n ⊢ t : P$, then \vspace{-0.60em}
    $$Γ,x₁:P₁^-,...,x_n:P_n^-; x₁ : (x₁ ⊩ P₁), ..., x_n : (x_n ⊩ P_n)  ⊢ t : (t ⊩ P).$$
\end{lem}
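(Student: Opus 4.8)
The plan is to prove Lemma~\ref{adequation_realizer} by induction on the derivation $Γ;x₁:P₁,...,x_n:P_n ⊢ t : P$, following exactly the structure of the proof system. For each inference rule we must check that, assuming the statement holds for the premises, it holds for the conclusion. The realizability translation $t ⊩ F$ is defined compositionally on $F$, so each logical rule should translate into a corresponding derivation in the \emph{same} proof system, but now applied to the realizability formulas and with the proof-term $t$ playing the role both of the extracted term and of a first-order variable in the context.

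First I would dispatch the structural and logical rules that commute straightforwardly with the translation. For \textsc{Axiom}, the goal reduces to $Γ,x:P^-; x:(x⊩P) ⊢ x:(x⊩P)$, which is again an instance of \textsc{Axiom} (after checking well-formedness of $x⊩P$ via the previous lemma). For \textsc{Application}, \textsc{Abstraction} and the quantifier rules, I expect the defining clause of $⊩$ to produce precisely the matching connective: e.g. since $(t_1\,t_2)⊩Q$ appears inside $t_1⊩(P⊸Q)\equiv ∀x:P^-,x⊩P ⊸ (t_1\,x)⊩Q$, the translated \textsc{Application} step is obtained by a $∀_1$-elimination instantiating $x$ with $t_2$, followed by an \textsc{Application} using the induction hypothesis on the two premises. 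Dually, \textsc{Abstraction} becomes a $∀_1$-introduction over $x:P^-$ followed by an \textsc{Abstraction}. The quantifier introduction/elimination rules for $α$, first-order $x$, and second-order $X$ match the corresponding clauses of $⊩$; the $∀_2$ case is the one to watch, since $t⊩∀X:[\vecτ],P$ introduces a fresh $α_X$ and extends the arity of $X$, so I must verify the bookkeeping of $Γ^-$ and the substitution $P[Q/X\,x_1...x_n]$ interacts correctly with the realizability translation, using the substitution lemmas.

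The exponential rules are where the real care is needed. For \textsc{Promotion}, the translation of $t[t_1/x_1,...,t_n/x_n]⊩!P\equiv !(t[\ldots]⊩P)$ requires re-applying \textsc{Promotion} to the translated premises $Δ_i ⊢ t_i:!P_i$, whose translation gives realizers of $!(t_i⊩P_i)$, together with the translated minor premise $⊢ t:(t⊩P)$; I must check that substituting $t_i⊩P_i$-realizers commutes with the realizability translation, i.e. that $(t[t_i/x_i])⊩P$ is obtained from $t⊩P$ by the same first-order substitution, which follows because $⊩$ threads its term argument through syntactically. \textsc{Contraction} similarly requires that the two copies $x:(x⊩!P)$ collapse, which is immediate since both hypotheses carry the same variable and the same formula. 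The \textsc{Equality} rule needs the observation that if $P_1 \reduce{$t_1=_τ t_2$} P_2$ via a context $Q$, then $t⊩P_1$ and $t⊩P_2$ are related by the \emph{same} rewriting (the realizability translation is compatible with the one-hole context substitution), so the side condition $\H ⊧_{Σ,Γ} t_1=_τ t_2$ still licenses the step.

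The main obstacle I anticipate is managing the context transformation $Γ^-$ coherently across the second-order rules, since $⊩$ simultaneously increases the arity of every second-order variable (replacing $X:[\vecτ]$ by $α_X:\Type, X:[\vecτ,α_X]$) and introduces the extracted term $t$ as a genuine first-order variable carrying the hypothesis $t⊩P$. Verifying that these two operations commute with all six substitutions — in particular that $(t⊩P)[\sigma/α]$, $(t⊩P)[Q/X\,\vec x]$ and the term-level substitutions match $(t[\ldots]⊩P)$ up to the established substitution lemmas — is the delicate, though essentially mechanical, core of the argument. Once these commutation facts are in place, each inductive case reduces to reapplying the same-named proof rule, and the induction closes.
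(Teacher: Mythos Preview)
Your proposal is correct and takes essentially the same approach as the paper: the paper's proof is simply a one-line appeal to induction on the derivation together with the ``good applicative behavior'' of realizability, and you have supplied exactly the case-by-case expansion of that induction, correctly identifying the nontrivial bookkeeping (the $Γ^-$ arity extension for the $∀_2$ rules, substitution commutation for \textsc{Promotion}, and compatibility of $⊩$ with rewriting for \textsc{Equality}). There is nothing to add.
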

\begin{proof}
  It is a consequence of the good ``applicative behavior" of realizability. 
  The result comes easily with an induction on the structure of proof of 
  $Γ;x₁:P₁,...,x_n:P_n ⊢ t : P$.
\end{proof}

\subsection*{Programming with proofs}

\begin{defn}
  Let $D$ be a formula such that $Γ, x : τ \ok D\,x:\Prop$ for some $τ$. 
  We say that $D$ is \emph{data type} of parameter $x$ of type $D^-$
   relatively to a $Γ$-model 
  $ℳ$ if we have : 
  \begin{enumerate}
    \item $ℳ ⊧ ∀ r\,x:D^-, (r ⊩ D) ⊸ r =_τ x$, 
    \item $ℳ ⊧ ∀ x:D^-, x ⊩ D$ (or equivalently the converse $∀ r\,x:D^-, r =_τ x ⊸ (r ⊩ D)$ of 1.) 
  \end{enumerate}
  We simply say that $D\,y$ is a data type in $ℳ$, if $D$ is a data type of parameter $y$
  relatively to $ℳ$ and for any term $t$ such that $Γ \ok t : D^-$, we write $D\,t$ 
  instead of $D[t/y]$. 
\end{defn}

\begin{lem}
  $N\,x$ is a data type in all $Σ$-models. 
\end{lem}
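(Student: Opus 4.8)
The plan is to unfold the realizability predicate $t \Vdash N\,x$ and read it off semantically in a $\Sign$-model $\M$. Writing $N\,x = \forall X:[\nat],\ !(\forall y, X\,y \multimap X\,(s\,y)) \multimap !(X\,0 \multimap X\,x)$ and applying the clauses of $\Vdash$ (each occurrence of $X$ acquiring one extra realizer argument $\alpha_X$), then forgetting the exponentials via the clause $\M \models !P \Leftrightarrow \M \models P$, I expect $\M \models (t \Vdash N\,x)$ to hold exactly when: for every PER $R$ (interpreting $\alpha_X$), every stable $E \subseteq \llbracket\nat\rrbracket \times |R|$ and every $p$ respecting $R$, if $(y,w)\in E$ implies $(s\,y, p\,w)\in E$ for all $y,w$, then $(0,q)\in E$ implies $(x, t\,p\,q)\in E$ for all $q \in |R|$ (here $|R|$ is the domain of $R$, and $s,0$ are interpreted through $\H$). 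I would carry out this computation first and exactly, since everything downstream depends on it.

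The technical heart is a preliminary lemma: every $x \in \llbracket\nat\rrbracket$ is extensionally a Church numeral, i.e. $x \sim^{\M}_{\nat} \bar k$ for some $k$, where $\bar k =_\beta \lambda f\,z.\,f^k z$. I would prove this by instantiating $x \sim^{\M}_{\nat} x$ at a \emph{free} PER built from fresh variables: let $R_0$ relate $t$ and $t'$ iff $t =_\beta t' =_\beta f^i n_0$ for a common $i$, with $f, n_0$ two distinct variables not free in a chosen representative of $x$. Then $f$ sends $R_0$-related arguments to $R_0$-related results and $n_0 \in |R_0|$, so $x\,f\,n_0 \in |R_0|$, whence $x\,f\,n_0 =_\beta f^k n_0$ for some $k$. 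Because $\beta$-equality is stable under substitution, replacing $f,n_0$ by arbitrary $g,b$ gives $x\,g\,b =_\beta g^k b$; a short induction then shows $x\,g\,b \sim^{\M}_{S} (g')^k\,b'$ whenever $g \sim^{\M}_{S\to S} g'$ and $b \sim^{\M}_{S} b'$, which is exactly $x \sim^{\M}_{\nat} \bar k$. The equations of $\H$, valid in every $\Sign$-model, make $0$ and $s$ the Church zero and successor, so $s^k 0 \sim^{\M}_{\nat} \bar k \sim^{\M}_{\nat} x$.

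For condition 1, fix $r,x \in \llbracket\nat\rrbracket$ with $\M \models (r \Vdash N\,x)$ and instantiate the unfolded statement at $R = R_0$, $p = f$, $q = n_0$, and $E = \{(n,b) : \exists i,\ n \sim^{\M}_{\nat} s^i 0 \wedge b \sim_{R_0} f^i n_0\}$. This $E$ is stable by construction and, since $s$ and $0$ are Church successor and zero, satisfies the step hypothesis and contains $(0,n_0)$; the conclusion yields $(x, r\,f\,n_0) \in E$, so some $i$ gives $x \sim^{\M}_{\nat} s^i 0$ and $r\,f\,n_0 =_\beta f^i n_0$. The substitution argument of the numeral lemma then forces $r \sim^{\M}_{\nat} \bar i \sim^{\M}_{\nat} x$, which is $\M \models r =_\nat x$ by the characterization of $=_\nat$. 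For condition 2, fix $x \in \llbracket\nat\rrbracket$; given arbitrary stable $E$, $p$ respecting $R$, the step hypothesis and $(0,q)\in E$, write $x \sim^{\M}_{\nat} \bar k$. A $k$-step induction on the step hypothesis gives $(s^k 0, p^k q) \in E$, while $x \sim^{\M}_{\nat} \bar k$ yields $x \sim^{\M}_{\nat} s^k 0$ and $x\,p\,q \sim_R p^k q$; stability of $E$ upgrades $(s^k 0, p^k q) \in E$ to $(x, x\,p\,q) \in E$, i.e. $\M \models (x \Vdash N\,x)$.

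The step I expect to be delicate is the numeral lemma, i.e. justifying that respecting all PERs forces an element of $\llbracket\nat\rrbracket$ to compute as a genuine numeral. This is a parametricity phenomenon that a bare PER reading of $\forall$ does not grant abstractly; the argument goes through only because we work in a concrete realizability model over $\Pgr$, where choosing the free PER on fresh variables and exploiting substitutivity of $\beta$-reduction manufactures the needed uniformity. Everything else — checking stability and the step hypotheses for the chosen $E$, the inductions, and translating $\sim^{\M}_{\nat}$ back to $=_\nat$ — is bookkeeping on the earlier lemmas.
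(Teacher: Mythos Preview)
Your argument is the standard one and is what the paper's one-line reference to Krivine's book gestures at; the paper gives no proof beyond that citation, so there is nothing substantive to compare. The unfolding of $r \Vdash N\,x$ is correct, and the parametricity device --- the ``free'' PER on $\{f^i n_0 : i \ge 0\}$ for fresh variables $f,n_0$ --- is exactly how Krivine-style realizability forces an element of $\llbracket\nat\rrbracket$ to behave as a Church numeral.

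One point needs care. You assert that ``the equations of $\H$, valid in every $\Sign$-model, make $0$ and $s$ the Church zero and successor''. As the paper defines it, a $\Sign$-model merely assigns to each symbol of $\Sign$ a program of the appropriate type; nothing forces it to validate $\H_0$. If one sets $\M_1(0)=\bar 5$ and $\M_1(s)$ to the identity, condition~2 already fails at $x=\bar 3$: take $R$ trivial and let $E$ be the $\sim^{\M}_{\nat}$-class of $\bar 5$ in the first coordinate. So either the lemma tacitly restricts to $\Sign$-models satisfying $\H_0$ --- which is what the downstream theorem assumes anyway --- or it is stated imprecisely; your proof goes through under that hypothesis, but you should make the assumption explicit rather than claim it for free. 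A smaller remark: in condition~1 you reuse the PER $R_0$ built from variables fresh for $x$, but the substitution step you then apply to $r$ also requires $f,n_0$ not to occur free in $r$, so choose them fresh for both from the start.
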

\begin{proof}
  The proof is similar that the one for $FA₂$ in \cite{LambdaCalculusTypesAndModels}.
\end{proof}

\begin{lem}
  If $A\,x$ and $B\,y$ are two data types in a $Γ$-model $ℳ$, so is
    $F\,f \equiv ∀x:A^-,A\,x ⊸ B\,(f\,x)$. 
\end{lem}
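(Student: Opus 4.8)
The plan is to unfold the two defining conditions of ``data type'' for $F$ and to reduce them, through the lemma identifying $=_\tau$ with the partial equivalence relation $\sim^{\mathcal{M}}_\tau$, to the PER structure of the arrow type $A^- \to B^-$ (which is exactly $F^-$).

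First I would record the syntactic data attached to $F$. Applying the projection $(\cdot)^-$, discarding the first-order quantifier and turning the linear arrow into a function arrow, one gets $F^- \equiv A^- \to B^-$, so the parameter $f$ ranges over $\llbracket A^- \to B^- \rrbracket$. Unfolding the realizability clauses for $\forall$ and $\multimap$ yields
$$r \Vdash F \;\equiv\; \forall x:A^-,\, \forall z:A^-,\; (z \Vdash A\,x) \multimap (r\,z) \Vdash B\,(f\,x),$$
so that semantically, realizing $F$ means sending a realizer of $A\,a$ to a realizer of $B\,(f\,a)$. It then remains to check the two model statements.

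For condition~2, $\mathcal{M} \models \forall f:F^-,\, f \Vdash F$, I fix $\phi \in \llbracket A^- \to B^- \rrbracket$ and values $a, c \in \llbracket A^- \rrbracket$ with $c \Vdash A\,a$. Condition~1 for $A$ converts this realizer into $c =_{A^-} a$, which the Leibniz-equality lemma reads as $c \sim^{\mathcal{M}}_{A^-} a$. As $\phi$ lies in the domain of $\sim_{A^- \to B^-}$ it respects the PER, so $\phi\,c \sim^{\mathcal{M}}_{B^-} \phi\,a$, i.e. $\phi\,c =_{B^-} \phi\,a$; feeding this into condition~2 for $B$ in its equivalent converse form $r =_{B^-} y \multimap (r \Vdash B)$ gives $(\phi\,c) \Vdash B\,(\phi\,a)$, as needed. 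For condition~1, $\mathcal{M} \models \forall r\,f:F^-,\, (r \Vdash F) \multimap r =_{F^-} f$, I fix $\rho, \phi \in \llbracket A^- \to B^- \rrbracket$ with $\rho \Vdash F$ and aim at $\rho \sim_{A^- \to B^-} \phi$. By definition of the PER at an arrow type this amounts to proving $\rho\,a_1 \sim^{\mathcal{M}}_{B^-} \phi\,a_2$ for every pair $a_1 \sim^{\mathcal{M}}_{A^-} a_2$. Given such a pair, condition~2 for $A$ in its converse form $r =_{A^-} x \multimap (r \Vdash A)$, applied with realizer $a_1$ and parameter $a_2$, produces $a_1 \Vdash A\,a_2$; the hypothesis $\rho \Vdash F$ then yields $(\rho\,a_1) \Vdash B\,(\phi\,a_2)$; and condition~1 for $B$ converts this back to $\rho\,a_1 =_{B^-} \phi\,a_2$, i.e. $\rho\,a_1 \sim^{\mathcal{M}}_{B^-} \phi\,a_2$.

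The unfoldings and the bookkeeping needed to move the values $a_1, a_2$ in and out of the model via the satisfaction lemmas are routine. The delicate step, I expect, is the \emph{mixed} instantiation used in condition~1: one must realize $A\,a_2$ by the \emph{other} representative $a_1$, not realize $A\,a_1$ by $a_1$. It is this crossing of representatives that makes extensionality of $\rho$ and $\phi$ genuinely necessary and that produces the arrow PER, so the argument really depends on the converse data-type condition for $A$ — the ability to manufacture a realizer out of a mere extensional equality — being available.
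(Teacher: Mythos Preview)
Your proof is correct and follows essentially the same route as the paper: for each of the two data-type conditions you combine condition~1 of one base type with (the converse form of) condition~2 of the other, exactly as the paper does. The only cosmetic difference is that, to close condition~1, the paper packages the final step as an appeal to the extensionality principle, whereas you unfold the PER at the arrow type directly via the Leibniz-equality lemma; since the extensionality principle is itself derived from that lemma and the arrow-PER definition, this is the same argument one layer down.
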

\begin{proof} We have to verify the two conditions of the definition. 
  \begin{enumerate}
    \item If $ℳ'$ is a $Γ,r:A^-→B^-,f:A^-→B^-$-model such that $ℳ' ⊧ r ⊩ F\,f$. 
      Since $r ⊩ F\,f \equiv ∀ s\,x, s ⊩ A\, x ⊸ (r\,s) ⊩ B\,(f\,x)$ and 
      by invoking the second condition for $A$ and the first for $B$ we have 
      $ℳ ⊧ ∀ s\,x, s =_{A^-} x ⊸ (r\,s) =_{B^-} (f\,x)$ which is equivalent
      by extensionality to $ℳ ⊧ r =_{A^-→B^-} f$. 
    \item Let $ℳ'$ be a $Γ,f:A^-→B^-$-model, we have to prove that $ℳ' ⊧ f ⊩ F\,f$ 
    or equivalently that $ℳ ⊧∀r\,x:A^-x, r⊩A\,x⊸(f\,r)⊩B\,(f\,x)$.
    But according to the first condition for $A$ it is stronger that 
    $ℳ ⊧∀r\,x:A^-x, r=_{A^-}x⊸(f\,r)⊩B\,(f\,x)$ which is implyed the second  
    condition for $B$.
  \end{enumerate}
\end{proof}

The following theorem state that if we can find a model $ℳ$ satisfying $\H$ (informally it means
that we know our specifications to be implementable), then the program $t$ extracted from the proof 
of a formula stating that a function $f$ is provably total implements this function. 
\begin{thm}
  Let $D₁\,x₁$, ..., $D_n\,x_n$, and $D$ be $n+1$ data types. 
  If $Γ \ok f:D₁^- → ... →D_n^-→D^-$
  If $$Γ; ⊢ t : ∀x₁:D₁^-, ...,x_n:D_n^-, D₁\,x₁ ⊸ ... ⊸ D_n\,x_n ⊸ D\,(f\,x₁\,...\,x_n),$$ then 
  for all $Σ,Γ,f:D₁^- → ... →D_n^-→D^-$-model $ℳ$ such $ℳ ⊧ \H$, $$ℳ ⊧t =_{D₁^- → ... →D_n^-→D^-} f.$$ 
\end{thm}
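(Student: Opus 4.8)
Fix a model $\mathcal{M}$ with $\mathcal{M} \models \H$, and write $\tau = D_1^- → \cdots → D_n^- → D^-$ together with
$$G \equiv ∀ x_1:D_1^-,\ldots, x_n:D_n^-,\ D_1\,x_1 ⊸ \cdots ⊸ D_n\,x_n ⊸ D\,(f\,x_1\cdots x_n).$$
The plan is to show first that $G$ is itself a data type of parameter $f$ of type $\tau$, then to use the two adequacy lemmas to establish that $t$ realizes $G$ in $\mathcal{M}$, and finally to read off $\mathcal{M} \models t =_{\tau} f$ from the defining property of data types.

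For the first point I would apply the closure lemma for function data types (the one stating that $∀ x:A^-, A\,x ⊸ B\,(g\,x)$ is a data type whenever $A$ and $B$ are) exactly $n$ times, peeling off the quantifiers and arrows from the right. Starting from the data types $D_n$ and $D$ one obtains that $∀ x_n:D_n^-, D_n\,x_n ⊸ D\,(g\,x_n)$ is a data type of parameter $g$ of type $D_n^- → D^-$; feeding this back into the same lemma with $A = D_{n-1}$ and iterating down to $D_1$ shows that $G$ is a data type of parameter $f$ of type $\tau$. This is exactly where $\mathcal{M} \models \H$ is used, since it is what makes the base formulas $D_i$ and $D$ data types in $\mathcal{M}$.

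Next I would push the hypothesis $\Gamma; {} \vdash t : G$ through the adequacy lemma for realizers (Lemma \ref{adequation_realizer}). As the linear context is empty, this produces a derivation $\Gamma; {} \vdash t : (t \Vdash G)$, a proof that $t$ realizes $G$. Applying the model adequacy lemma (Lemma \ref{adequation_model}) to this derivation yields $\H \models_{\Gamma} (t \Vdash G)$, so from $\mathcal{M} \models \H$ we conclude $\mathcal{M} \models (t \Vdash G)$.

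Finally I would invoke clause (1) of the data-type definition for $G$, namely $\mathcal{M} \models ∀ r\, f:\tau,\ (r \Vdash G) ⊸ r =_{\tau} f$, and instantiate the bound $r$ by $t$ and the bound $f$ by the value $\llbracket f \rrbracket_{\mathcal{M}}$ of the free variable $f$. Both instantiations are legitimate: $\llbracket f \rrbracket_{\mathcal{M}} ∈ \llbracket \tau \rrbracket_{\mathcal{M}}$ because $\mathcal{M}$ interprets $f$ at type $\tau$, and $\llbracket t \rrbracket_{\mathcal{M}} ∈ \llbracket \tau \rrbracket_{\mathcal{M}}$ by the adequacy lemma for terms together with the lemma on the types of proof terms, which gives $\Gamma^\star \ok t : G^- = \tau$. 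Reassigning $f$ to its own value leaves $\mathcal{M}$ unchanged and $(r \Vdash G)[t/r] \equiv (t \Vdash G)$, so the premise of the instantiated implication is discharged by the previous step, and we obtain $\mathcal{M} \models t =_{\tau} f$. The hard part will be precisely this last matching of variables: one has to check, through the substitution and satisfaction lemmas for $\models$, that the instance of $(r \Vdash G)$ obtained by setting the bound parameter $f$ to $\llbracket f \rrbracket_{\mathcal{M}}$ really coincides with the already validated formula $(t \Vdash G)$, the two copies of $f$ — the free one of the model and the bound one of clause (1) — being interpreted by the same element. The only other subtlety, the passage from $\Gamma$ to $\Gamma^-$ hidden in the realizability lemma, is harmless here, since the second-order variables created by $\Vdash$ occur only under binders, so $\mathcal{M}$ already interprets every free variable that matters.
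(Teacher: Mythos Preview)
Your argument is correct and in fact packages the paper's reasoning more cleanly. The paper does not invoke the closure lemma for function data types; instead it unfolds $t \Vdash G$ explicitly into
\[
\forall r_1\,x_1\ldots\forall r_n\,x_n,\ (r_1 \Vdash D_1\,x_1) \multimap \cdots \multimap (r_n \Vdash D_n\,x_n) \multimap (t\,r_1\cdots r_n) \Vdash D\,(f\,x_1\cdots x_n),
\]
then replaces each $r_i \Vdash D_i\,x_i$ by $r_i =_{D_i^-} x_i$ and the conclusion by $(t\,r_1\cdots r_n) =_{D^-} f\,x_1\cdots x_n$ using the two data-type clauses for the $D_i$ and $D$ separately, and finishes by extensionality. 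Your route factors this through the observation that $G$ is itself a data type, so a single application of clause~(1) suffices; the paper essentially inlines the proof of the closure lemma $n$ times. What your approach buys is modularity; what the paper's buys is that one sees concretely which clause of which $D_i$ is used where.

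Two small points to tighten. First, iterating the closure lemma literally yields the interleaved form $\forall x_1,\,D_1\,x_1 \multimap \forall x_2,\,D_2\,x_2 \multimap \cdots$, not the prenex form of $G$; you should remark that the two are semantically equivalent in the model (and that $\Vdash$ respects this), which is immediate but worth saying. Second, your sentence ``this is exactly where $\mathcal{M}\models\H$ is used, since it is what makes the base formulas $D_i$ and $D$ data types'' misplaces the hypothesis: the $D_i$ are \emph{assumed} to be data types in $\mathcal{M}$, and $\mathcal{M}\models\H$ is needed precisely where you later invoke it, namely to pass from the derivation of $t \Vdash G$ to $\mathcal{M}\models (t \Vdash G)$ via the model adequacy lemma.
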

\begin{proof}
  By lemma \ref{adequation_realizer} we have 
    $Γ;⊢t ⊩D₁\,x₁ ⊸ ... ⊸ D_n\,x_n ⊸ D\,(f\,x₁\,...\,x_n)$
  which is equivalent to 
    \vspace{-0.5em}
    $$Γ;⊢∀r₁\,x₁:D₁^-, ..., ∀r_n\,x_n:D_n^-, r₁ ⊩D₁\,x₁ ⊸ ... ⊸ r_n ⊩ D_n\,x_n ⊸ (t\,r₁\,...\,r_n) ⊩ D\,(f\,x₁\,...\,x_n)$$
    \vspace{-0.5em}
  by lemma \ref{adequation_model} we have
    \vspace{-0.5em}
    $$ℳ ⊧∀r₁\,x₁:D₁^-, ..., ∀r_n\,x_n:D_n^-, r₁ ⊩D₁\,x₁ ⊸ ... ⊸ r_n ⊩ D_n\,x_n ⊸ (t\,r₁\,...\,r_n) ⊩ D\,(f\,x₁\,...\,x_n)$$
    \vspace{-0.5em}
  but since every one is a data type we obtain 
    \vspace{-0.5em}
    $$ℳ ⊧∀r₁\,x₁:D₁^-, ..., ∀r_n\,x_n:D_n^-, r₁ =_{D₁^-}x₁ ⊸ ... ⊸ r_n =_{D_n^-} x_n ⊸ (t\,r₁\,...\,r_n) =_{D^-} (f\,x₁\,...\,x_n)$$
    \vspace{-0.5em}
  which is equivalent to
    $ℳ ⊧t =_{D₁^- → ... →D_n^-→D^-} f.$
\end{proof}

\section{Elementary Time Characterisation}

\subsection*{Correctness}

We describe here how we can bring our system back toward
Elementary Affine Logic in order to prove that extracted programs are
elementary bounded. 
In this section, we will consider the grammar of second-order
elementary logic which is basically a linear version of system $\mathcal{F}$
types. 
\begin{equation*}
τ,σ,...  \quad := \quad α
  \quad|\quad ∀α,τ
  \quad|\quad σ ⊸ τ \quad | \quad !τ\end{equation*}

\begin{defn}
Given a formula $F$, we define the type $F^∘$ recursively 
built from $F$ in the following way.
\begin{equation*}
\begin{array}{lcccccr}
(X\, t_1\,...\, t_n)^∘ = α_X &
(A ⊸ B)^∘ = A^∘ ⊸ B^∘ &
(∀α,F)^∘ = F^∘ &
(∀ x:α,F)^∘ = F^∘  &
(!F)^∘ = !F^∘  \\
\end{array}
\end{equation*}
\begin{equation*}
(∀ X:[τ_1,...,τ_n],F)^∘ = ∀α_X,F^∘.
\end{equation*}
\end{defn}

We map the rules of our system by removing first-order with our map $⋅↦⋅^∘$, the rules of 
equality, introduction and elimination for first-order $∀$ and type $∀$ then become
trivial. We also erase some type information on typed terms 
in order to obtain the following \textit{à la} church type system which is known as 
\textit{elementary affine logic}.
\begin{fmpage}{0.95\textwidth}
\begin{prooftree}
    \AxiomC{}
    \RightLabel{\sc \small Axiom}
    \UnaryInfC{$ x : τ \ell x : τ$}

    \AxiomC{$Δ \ell t : σ$}
    \RightLabel{\sc \small Weakening}
    \UnaryInfC{$Δ, x : τ \ell t : σ$}
    
    \AxiomC{$Δ, x:!σ, x:!σ \ell t : τ$}
    \RightLabel{\sc \small Contraction}
    \UnaryInfC{$Δ, x:!σ \ell t : τ$}

    \noLine
    \TrinaryInfC{}
\end{prooftree}\begin{prooftree}
    
    \AxiomC{$Δ₁ ⊢ t_1 : !τ₁$ \qquad ... \qquad $Δ_n ⊢ t_n : !τ_n$}
    \AxiomC{$ x₁ : τ₁,...,x_n : τ_n \ell t : τ$}
    \RightLabel{\sc \small Promotion}
    \BinaryInfC{$Δ₁,...,Δ_n \ell t[t₁/x_1,...,t_n/x_n] : !τ$}

\end{prooftree}\begin{prooftree}

    \AxiomC{$Δ₁ \ell s : τ ⊸ σ$}
    \AxiomC{$Δ₂ \ell t : τ$}
    \RightLabel{\sc \small Application}
    \BinaryInfC{$Δ₁,Δ₂ \ell(s\,\,t) : σ$}

    \AxiomC{$Δ, x : σ \ell t : τ $}
    \RightLabel{\sc \small Abstraction}
    \UnaryInfC{$Δ \ell (λx:σ.t) : σ ⊸ τ$}

    \noLine
    \BinaryInfC{}
\end{prooftree}\begin{prooftree}
    \AxiomC{$Δ \ell t : τ$}
    \LeftLabel{$α\not\in Δ$}
    \RightLabel{\sc \small$∀$-Intro}
    \UnaryInfC{$Δ \ell t : ∀ α,τ$}

    \AxiomC{$Δ \ell t : ∀ α,τ$}
    \RightLabel{\sc \small $∀$-Elim}
    \UnaryInfC{$Δ \ell t : τ[σ/α]$}

    \noLine
    \BinaryInfC{}
\end{prooftree}
\begin{center}
\textbf{Elementary Affine Logic}
\end{center}
\end{fmpage}
\vspace{0.5em}
We use this translation from our type system to elementary affine logic to
obtain the following lemma. 
\begin{lem} 
  If $Γ;Δ \ok t : P$, then $Δ^∘ \ell \overline{t} : P^∘$ where $\overline{t}$ is
  the pure term obtained by removing type information from $t$ and $Δ^∘$ is
  obtained by sending $x:P$ to $x:P^∘$. 
\end{lem}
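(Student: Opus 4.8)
The plan is to proceed by induction on the derivation of $Γ;Δ ⊢ t : P$, translating each rule of the proof system into a derivation of elementary affine logic. The basic bookkeeping is that type erasure commutes with the term constructors and with substitution: one checks directly that $\overline{t_1\,t_2} = \overline{t_1}\,\overline{t_2}$, $\overline{λx:σ.t} = λx.\overline{t}$, $\overline{Λα.t} = \overline{t}$, $\overline{t\,τ} = \overline{t}$, and $\overline{t[t_1/x_1,...,t_n/x_n]} = \overline{t}[\overline{t_1}/x_1,...,\overline{t_n}/x_n]$. Since $(\cdot)^∘$ satisfies $(A ⊸ B)^∘ = A^∘ ⊸ B^∘$ and $(!F)^∘ = !F^∘$, the purely structural rules transport verbatim: Axiom, Weakening and Contraction become their EAL namesakes (the latter using $(!P)^∘ = !P^∘$), Application and Abstraction use the EAL application and abstraction rules together with the erasure equalities above, and Promotion maps to EAL Promotion once erasure is pushed through the simultaneous substitution.

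Next I would dispatch the rules that become trivial under the projection. The key observation is that $P^∘$ never contains a free type variable of $\V_0$ and is invariant under first-order data: by an easy induction on $P$ one obtains $(P[τ/α])^∘ = P^∘$ for $α \in \V_0$ and $(P[a/x])^∘ = P^∘$ for a first-order variable $x$, because the only variables the projection records are the $α_X$ coming from second-order variables, and every first-order term sits inside an atomic subformula $X\,t_1\,...\,t_n$ that is collapsed to $α_X$. Consequently $∀_α$-Intro, $∀_α$-Elim, $∀_1$-Intro and $∀_1$-Elim all leave both $\overline{t}$ and the projected formula unchanged, since $(∀α,P)^∘ = (∀x:τ,P)^∘ = P^∘$, so the induction hypothesis already yields the conclusion. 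The Equality rule is trivial for the same reason: if $P_1 \equiv Q[t_1/x]$ and $P_2 \equiv Q[t_2/x]$ then $P_1^∘ = Q^∘ = P_2^∘$, and the proof-term is untouched.

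The only rules that do real work are the second-order ones. For $∀_2$-Intro, from $Γ,X:[τ_1,...,τ_n];Δ ⊢ t : P$ the induction hypothesis gives $Δ^∘ \ell \overline{t} : P^∘$; since $\overline{Λα_X.t} = \overline{t}$ and $(∀X:[τ_1,...,τ_n],P)^∘ = ∀α_X,P^∘$, I apply EAL $∀$-Intro, whose side condition $α_X \not\in Δ^∘$ holds because $X$ is fresh in the premise and the free type variables of $Δ^∘$ are among the $α_Y$ for the second-order variables $Y$ already recorded in $Γ$. For $∀_2$-Elim I need the substitution lemma
\[
  (P[Q/X\,x_1\,...\,x_n])^∘ = P^∘[Q^∘/α_X],
\]
proved by induction on $P$: an occurrence $X\,t_1\,...\,t_n$ projects to $α_X$, whereas after substitution it becomes $Q[t_1/x_1]...[t_n/x_n]$, whose projection is $Q^∘$ by the first-order invariance above, while all other connectives commute with the projection. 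Then, using $\overline{t\,Q^-} = \overline{t}$ and the induction hypothesis $Δ^∘ \ell \overline{t} : ∀α_X,P^∘$, one instance of EAL $∀$-Elim instantiating $α_X$ by $Q^∘$ produces $Δ^∘ \ell \overline{t} : P^∘[Q^∘/α_X]$, which is exactly the projected conclusion. The main obstacle is precisely this substitution lemma together with the invariance of $(\cdot)^∘$ under first-order substitution; once these are in place the remaining cases are immediate, so I would establish them first and then run the induction.
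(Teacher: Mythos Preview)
Your proposal is correct and is precisely the induction on derivations that the paper leaves implicit: the paper merely states that the map $(\cdot)^\circ$ collapses the first-order and type quantifier rules and the {\sc Equality} rule, and that the remaining rules are carried verbatim to their EAL counterparts, without spelling out the substitution lemma $(P[Q/X\,x_1\ldots x_n])^\circ = P^\circ[Q^\circ/\alpha_X]$ or the invariance of $(\cdot)^\circ$ under first-order and type-variable substitution that you identify as the crux. One small adjustment: the EAL presented in the paper is \emph{à la Church}, so the erasure of an abstraction should read $\overline{\lambda x{:}P^-.t}=\lambda x{:}P^\circ.\overline{t}$ rather than dropping the annotation entirely; this is cosmetic and does not affect your argument.
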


The data type $N\,x$ representing integers is sent to 
  $(N\,x)^∘ = ∀ α,!(α ⊸ α) ⊸ !(α ⊸ α)$ (denoted $N^∘$). 

\begin{defn}
  We say that a program $t ∈ \Pgr$ represent
  a (set-theoretical) total function $f$ 
  if for all integers $m₁$, ..., $m_n$, the term $(t\,\lceil m₁\rceil\,...\,\lceil m_n\rceil)$
  may be normalized to the church numeral $\lceil f(m₁,...,m_n)\rceil$. 
  We say that $t ∈ \elem$ if it represents a total function $f$
  belonging to the set of elementary computable functions (where
  $\lceil m\rceil$ is the $m$-th Church integer).
\end{defn}

The following lemma is a bit of a folklore result. The closest reference would
be the appendix of \cite{LLL}.

\begin{lem}
 If $\ell t : !^{k₁}N^∘ ⊸ ... ⊸ !^{k_n}N^∘ ⊸ !^k N^∘$ then 
  $t ∈ \elem$.
\end{lem}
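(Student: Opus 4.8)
The plan is to instantiate the typed term $t$ at Church numerals and then invoke the folklore elementary normalisation bound for \emph{elementary affine logic}. First I would record that the $m$-th Church numeral $\lceil m\rceil$ is typable by $\ell \lceil m\rceil : N^\circ$ through the usual EAL typing of iterators (treating the iterated function at type $!(\alpha\multimap\alpha)$), and that the pure term $\lceil m\rceil$ has size linear in $m$. Using the \textsc{Promotion} rule with no left premise, iterated $k_i$ times, I obtain $\ell \lceil m_i\rceil : !^{k_i}N^\circ$ without altering the underlying pure term. Applying the \textsc{Application} rule $n$ times then yields, for all integers $m_1,\dots,m_n$,
$$\ell\ (t\,\lceil m_1\rceil\,\cdots\,\lceil m_n\rceil) : !^k N^\circ.$$
Write $u$ for this term; its size is $|t| + O(m_1 + \cdots + m_n)$, hence linear in the inputs once $t$ is fixed.

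Next I would appeal to the elementary bound cited from the appendix of \cite{LLL}: any EAL-typable term of exponential box-depth $d$ normalises in a number of $\beta$-steps bounded by a tower of exponentials of \emph{fixed} height $d$ evaluated at $O(|u|)$. The crucial observation is that the depth $d$ of $u$ depends only on $t$ and the integers $k,k_1,\dots,k_n$, never on the inputs $m_i$: plugging numerals into $t$ enlarges only the base of the tower, not its height. Consequently the length of every reduction sequence out of $u$, and therefore the size of every intermediate term, is elementary in $\max_i m_i$. In particular $u$ is strongly normalising, and by confluence of the pure $\lambda$-calculus its normal form is unique.

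It then remains to identify this normal form. By subject reduction for EAL the normal form still has type $!^k N^\circ$, and since the exponential modality leaves no trace in the underlying pure term, as a closed pure normal form it is typable at $N^\circ = \forall\alpha,!(\alpha\multimap\alpha)\multimap!(\alpha\multimap\alpha)$. A canonical-forms argument shows that such a normal form is necessarily a Church numeral $\lceil p\rceil$; this defines a genuine total function $f(m_1,\dots,m_n) := p$, well defined by uniqueness of normal forms. Since $\lceil p\rceil$ has size $\Theta(p)$ and occurs as the normal form of $u$, the value $p$ is bounded by an elementary function of the inputs, and the entire normalisation can be simulated by a Turing machine running in elementary time (elementarily many steps, each manipulating terms of elementary size). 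Hence $f$ is elementary computable, so $t\in\elem$.

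I expect the main obstacle to be the canonical-forms step, namely arguing rigorously that the pure normal form of a closed EAL-term of type $!^k N^\circ$ is exactly a Church numeral, since the remaining ingredients are either routine (the EAL typing of the applied term via \textsc{Promotion} and \textsc{Application}) or explicitly imported as folklore (the elementary step-count bound, where the only subtlety is to keep the tower height independent of the inputs).
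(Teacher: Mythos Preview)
Your proposal is correct and follows essentially the same route as the paper: apply $t$ to typed Church numerals to obtain $\ell\,(t\,\lceil m_1\rceil\cdots\lceil m_n\rceil):!^kN^\circ$, then invoke the stratification bound from \cite{LLL} with the key observation that the box depth is fixed independently of the $m_i$. The only presentational difference is that the paper phrases the bound via proof nets and level-by-level cut elimination rather than citing it as a black box on terms, and it omits the canonical-forms step you (rightly) flag as the one point needing care.
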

\begin{proof} (very rough sketch)
  You can bring the normalization of $(t\,\lceil m₁\rceil\,...\,\lceil
  m_n\rceil)$ back to 
  the normalization of a proof net corresponding to the proof tree that 
  $\ell (t\,\lceil m₁\rceil\,...\,\lceil m_n\rceil) : !^k N^∘$. Promotion rules 
  are represented as boxes in the proof net. These boxes stratify the proof net in 
  the sense that we can define the \emph{depth} of a node to be the number of boxes containing
  this node. And the depth of the net is the maximal depth of its nodes. If $N$ is the size
  of the proof net, then there is a clever strategy to eliminate all cuts at a given depth
  (without changing the depth) by multiplying the size of the net by at most $2^N$. We 
  therefore obtain the exponential tower by iterating this process for each depth. 
\end{proof}

Finally by combining the last two lemmas, we prove the desired correctedness
theorem.
\begin{thm}
  If we have
  $$Γ, f: \nat → ... → \nat; ⊢ t : ∀x₁:\nat...∀x_n:\nat, !^{k_1}N\,x_1 ⊸ ... ⊸
!^{k_n}N\,x_k ⊸ !^k N (f\,x_1\,...\,x_n)$$
  then $\overline{t} ∈ \elem$ where $\overline{t}$ is the untyped λ-term
  obtained by erasing type information from $t$.
\end{thm}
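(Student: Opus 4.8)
The plan is to obtain the conclusion by composing the two preceding lemmas: the translation of our proof system into elementary affine logic, and the folklore complexity bound for EAL derivations. First I would apply the translation lemma to the hypothesis derivation $Γ, f:\nat → ... → \nat; ⊢ t : P$, where $P$ abbreviates the formula $∀x_1:\nat,...,∀x_n:\nat,!^{k_1}N\,x_1 ⊸ ... ⊸ !^{k_n}N\,x_n ⊸ !^k N\,(f\,x_1\,...\,x_n)$. The linear context of this sequent is empty, so its projection is empty as well, and the lemma yields a closed EAL derivation $\ell \overline{t} : P^\circ$.

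The second step is to unfold the definition of $(\cdot)^\circ$ on $P$. The leading first-order quantifiers $∀x_i:\nat$ disappear, since $(∀x:α,F)^\circ = F^\circ$. Each hypothesis $!^{k_i}N\,x_i$ projects to $!^{k_i}(N\,x_i)^\circ = !^{k_i}N^\circ$, using $(!F)^\circ = !F^\circ$ together with the remark following the translation lemma that $(N\,x)^\circ = N^\circ$ regardless of the first-order argument; likewise the conclusion $!^k N\,(f\,x_1\,...\,x_n)$ projects to $!^k N^\circ$. Since $(\cdot)^\circ$ commutes with ⊸, this gives $P^\circ = !^{k_1}N^\circ ⊸ ... ⊸ !^{k_n}N^\circ ⊸ !^k N^\circ$, so that $\ell \overline{t} : !^{k_1}N^\circ ⊸ ... ⊸ !^{k_n}N^\circ ⊸ !^k N^\circ$. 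I would then apply the folklore lemma to a derivation of exactly this shape and conclude $\overline{t} ∈ \elem$.

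I do not expect a genuine obstacle here, only two bookkeeping points that must be verified. First, one should check that the extracted term is closed, i.e. that neither the function symbol $f$ nor the bound first-order variables $x_i$ occur in $\overline{t}$: this holds because the rules that manipulate first-order data (the first-order elimination rules and the Equality rule) leave the proof-term unchanged, so $f$ influences only the specification carried by the formula and never the computational content. Second, one should observe that the typing context $Γ, f:\nat → ... → \nat$ is irrelevant to the resulting EAL sequent, since the translation lemma retains only the projection of the (empty) linear context. Once these are settled, the theorem follows immediately from the composition of the two lemmas.
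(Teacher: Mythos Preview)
Your proposal is correct and follows exactly the approach the paper takes: the paper's entire proof is the single sentence ``by combining the last two lemmas'', and you have simply spelled out that combination, computing $P^\circ$ explicitly and noting why the typing context $Γ$ and the first-order data drop out. The extra bookkeeping you mention (closedness of $\overline{t}$, irrelevance of $Γ$) is accurate and is precisely what the translation lemma already guarantees, since only the linear context $Δ$ survives under $(\cdot)^\circ$.
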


\subsection*{Completeness}

In this section we give two proofs of the fact that all elementary
recursive functions may be extracted from a proof of totality. 

In order to ease the reading on paper, we omit term annotations ( the ``$x
:~$" in $Δ$ and ``$t :$" on the right-hand side of the symbol $⊢$) since, given
a proof tree, theses decorations are unique up to renaming of variables. We also
allow ourselves to let the typing context $Γ$ and proofs of the typing sequents
$\ok$ implicit. Theses three derivable rules will be very useful in the following.
\begin{lem}
   These rules are derivable:
    \begin{prooftree}
        \AxiomC{$Δ ⊢ A$}
        \UnaryInfC{$!Δ ⊢ !A$}
        
        \AxiomC{$Δ,A,B ⊢ C$}
        \UnaryInfC{$Δ,A⊗B ⊢ C$}

        \AxiomC{$Δ₁ ⊢ A$}
        \AxiomC{$Δ₂ ⊢ B$}
        \BinaryInfC{$Δ₁,Δ₂ ⊢ A⊗B$}  
        \noLine
        \TrinaryInfC{}
    \end{prooftree}
\end{lem}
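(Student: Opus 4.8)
The plan is to prove each of the three derived rules separately, each by a short construction combining the basic rules of the proof system. I will use the abbreviated notation of the section (dropping term annotations), since the extracted $\lambda$-terms are determined up to renaming anyway.

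For the first rule, the functorial promotion $\frac{\Delta \vdash A}{!\Delta \vdash !A}$, I would proceed as follows. Suppose $\Delta = x_1 : B_1, \ldots, x_n : B_n$, so that $!\Delta = x_1 : !B_1, \ldots, x_n : !B_n$. The premise gives a derivation $\Delta \vdash A$, i.e.\ $x_1 : B_1, \ldots, x_n : B_n \vdash A$. I would feed this as the right premise of the \textsc{Promotion} rule, and for each $i$ supply the axiom $x_i : !B_i \vdash x_i : !B_i$ as the $i$-th left premise. The \textsc{Promotion} rule then concludes $x_1 : !B_1, \ldots, x_n : !B_n \vdash {!}A$, which is exactly $!\Delta \vdash {!}A$. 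The substitution $t[x_i/x_i]$ appearing in the conclusion of \textsc{Promotion} is the identity, so the proof term is unchanged, matching the claim.

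For the tensor elimination $\frac{\Delta, A, B \vdash C}{\Delta, A\otimes B \vdash C}$, I would unfold $A \otimes B \equiv \forall X, (A \multimap B \multimap X) \multimap X$ and instantiate $X$ with the goal $C$. Starting from the hypothesis $A \otimes B$, an application of $\forall_2$-\textsc{Elim} (with $Q \equiv C$) yields $(A \multimap B \multimap C) \multimap C$. It then suffices to build, in context $\Delta$, a proof of $A \multimap B \multimap C$ and apply it. The premise $\Delta, A, B \vdash C$ gives, after two uses of \textsc{Abstraction}, a derivation of $\Delta \vdash A \multimap B \multimap C$; applying the eliminated tensor to this and collecting the linear contexts via \textsc{Application} produces $\Delta, A \otimes B \vdash C$. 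For tensor introduction $\frac{\Delta_1 \vdash A \quad \Delta_2 \vdash B}{\Delta_1, \Delta_2 \vdash A \otimes B}$, I would introduce $X$ by $\forall_2$-\textsc{Intro} and then a hypothesis $h : A \multimap B \multimap X$ by \textsc{Abstraction}; applying $h$ successively to the premises $\Delta_1 \vdash A$ and $\Delta_2 \vdash B$ via \textsc{Application} yields $X$ in the merged context, and discharging $h$ and generalizing $X$ gives the result.

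The only point demanding care, rather than a genuine obstacle, is the bookkeeping of linear contexts in the \textsc{Application} rule: that rule splits the linear zone, so I must verify that the hypotheses $A$ and $B$ (respectively the contexts $\Delta_1$ and $\Delta_2$) land in disjoint parts and that no hypothesis is silently duplicated or dropped outside the exponential discipline. Since \textsc{Application} is additive in the intuitionistic context $\Gamma$ but multiplicative in $\Delta$, and since each of the three constructions uses each linear hypothesis exactly once, the side conditions are met automatically. I therefore expect no real difficulty; the lemma is essentially a routine packaging of the primitive rules, and its purpose is merely to streamline the completeness proof that follows.
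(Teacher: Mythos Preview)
Your proposal is correct: each of the three constructions is the standard one and uses the primitive rules exactly as intended. The paper itself states this lemma without proof, treating the derivability of these rules as routine, so there is nothing to compare against; your argument is precisely the expected unfolding.
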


\subsubsection*{First proof of completeness: using the completeness of EAL}

The following theorem gives us a link between typable terms in $ELL$ and
provably total functions in our system. And if we admit the completeness 
of EAL, it gives us directly that all elementary recursive functions may 
be extracted from a proof of totality. 

\begin{thm}
  Let $t$ such that 
    $\ell t : \nat ⊸ ... ⊸ \nat ⊸ !^k \nat$, then 
  $$⊢ ∀ x₁ ... x_n, N\,x₁ ⊸ ... ⊸ N\,x_n ⊸ !^{k+1} N\,(t\, x₁\,...\,x_n).$$
\end{thm}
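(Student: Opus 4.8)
The plan is to build the required proof of our system \emph{directly from the EAL typing derivation} of $t$, restoring along the way the first-order content that the projection $\cdot^\circ$ discards. The crucial preliminary remark is that a term with $\ell\, t : \nat ⊸ \cdots ⊸ \nat ⊸ !^{k}\nat$, each $\nat$ read as the elementary type $N^{\circ} = \forall\alpha,!(\alpha ⊸ \alpha) ⊸ !(\alpha ⊸ \alpha)$, is, once all modalities and linearity are forgotten, a legitimate first-order term $\ok\, t : \nat → \cdots → \nat → \nat$ of our calculus, since $(N\,x)^{-} \equiv \nat$. Hence for fresh $x_{1},\dots,x_{n}:\nat$ the first-order term $t\,x_{1}\,\dots\,x_{n}$ and the formula $N\,(t\,x_{1}\,\dots\,x_{n})$ are well formed, so the target sequent is meaningful, and the task reduces to exhibiting an inhabitant of it.

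First I would organise the construction as a \emph{decoration} of the EAL derivation: to each occurrence of $N^{\circ}$ I attach a first-order term of type $\nat$, refining that $N^{\circ}$ into the dependent data type $N$ applied to the term, and more generally replacing every EAL type $\sigma = P^{\circ}$ by its intended refinement $P$. The $n$ input occurrences are refined by $x_{1},\dots,x_{n}$, and all remaining refinements are fixed by how the rules propagate them. I would then verify, by induction on the EAL derivation, that each EAL rule is mirrored by the homonymous rule of our system — Axiom, Weakening, Contraction, Promotion, Application, Abstraction — with the EAL $\forall$-rules lifting to $\forall_{\alpha}$ or $\forall_{2}$ according to the origin of the quantified variable.

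The step that is genuinely non-mechanical, and where the hypotheses $N\,x_{i}$ and the equations of $\H$ are used, is the treatment of iteration. Whenever the EAL derivation exploits a numeral by instantiating its $\forall\alpha$ and feeding a step $!(\alpha ⊸ \alpha)$, the refined principle $N\,x = \forall X:[\nat],\,!(\forall y,\,X\,y ⊸ X\,(s\,y)) ⊸ !(X\,0 ⊸ X\,x)$ instead demands a \emph{dependent} step $!(\forall y,\,X\,y ⊸ X\,(s\,y))$ for a suitably chosen predicate $X$. Supplying it forces me, at each use, to pick the refining predicate, to peel the inner $\forall y$ by $\forall_{1}$-Elim, and to rewrite the resulting index by the Equality rule using the defining equations carried by $\H$ (notably $s\,n =_{\nat} \Lambda\alpha.\dots$ and $0 =_{\nat} \Lambda\alpha.\dots$ over the signature $\Sign$). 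Threaded through the whole derivation, the first-order parameter accumulated in the conclusion is provably equal, modulo $\H$, to $t\,x_{1}\,\dots\,x_{n}$, and a final Equality step rewrites it into this canonical form.

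The extra exclamation mark is explained by this reconstruction: turning the output numeral, available only under $!^{k}$, into an inhabitant of $N\,(t\,x_{1}\,\dots\,x_{n})$ — whose own premise sits under a $!$ — is carried out by \emph{re-iterating} the successor through that numeral, and a numeral living under $!^{k}$ yields a reconstructed dependent numeral under $!^{k+1}$, via one further Promotion. I expect the main obstacle to be the coherence of the decoration under the two structural rules that do not act pointwise on formulas: Contraction, which duplicates a boxed subproof and must therefore duplicate its refinement consistently, and Promotion, which gathers several contexts inside a box and must reconcile their separate refinements with the single first-order term emerging from the boxed subderivation. Checking that every refined formula stays well formed, in the sense of Lemma~\ref{lemma_moins}, and that all equalities invoked along the way remain consequences of $\H$, is the technical heart of the argument; once it is settled, the decorated derivation is itself the desired proof of the totality formula.
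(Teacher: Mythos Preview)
Your proposal is in the right spirit but has a real gap: the ``decoration'' of the EAL derivation is not well defined as you describe it. You write that after fixing the $n$ inputs, ``all remaining refinements are fixed by how the rules propagate them''. That is not true. The projection $P\mapsto P^{\circ}$ is far from injective, and an EAL derivation of $t$ may instantiate its internal $\forall\alpha$'s with arbitrary types; at each $\forall$-Elim you must pick a formula $Q$ with $Q^{\circ}=\sigma$, and nothing in your scheme tells you which one. Likewise, once a numeral is opened, subformulas like $\alpha\multimap\alpha$ appear, and there is no canonical ``dependent'' version of those to decorate with. Your later remark that you ``expect the main obstacle to be the coherence of the decoration'' is exactly right --- but you have not supplied the device that makes it cohere.

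The paper supplies that device: the realizability translation $r\Vdash F$ already set up in the previous section. The argument is then short and uniform. First embed EAL into the system by sending each EAL type variable to a \emph{nullary} second-order variable; the EAL $\nat$ becomes the \emph{non-dependent} formula $N \equiv \forall X,\,!(X\multimap X)\multimap !(X\multimap X)$, and the EAL derivation transports verbatim to a proof $\vdash t : N\multimap\cdots\multimap N\multimap !^{k}N$ in our system. Now apply the adequacy lemma for realizers once, obtaining $\vdash t \Vdash (N\multimap\cdots\multimap !^{k}N)$, i.e.\ $\vdash \forall r_{1}\dots r_{n},\,(r_{1}\Vdash N)\multimap\cdots\multimap !^{k}\bigl((t\,r_{1}\dots r_{n})\Vdash N\bigr)$. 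The decoration you were looking for is precisely $s\Vdash(\text{embedding of }\sigma)$ at each node; the adequacy lemma is the induction you sketched, done once and for all. Two small conversion lemmas then bridge $r\Vdash N$ and the genuine data type $N\,r$: one shows $(r\Vdash N)\multimap N\,(r\,\nat\,s\,0)$, the other shows $N\,r\multimap !(r\Vdash N)$, and it is this second lemma --- proved by iterating the successor through $N\,r$, using the $\mathcal{H}$-equations for $0$ and $s$ --- that accounts for the extra $!$. Note in particular that $r\Vdash N$ is \emph{not} the formula $N\,r$; identifying them, as your refinement ``$N^{\circ}\rightsquigarrow N\,u$'' implicitly does, is where your construction would break.
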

\begin{proof}
  Let $N$ be the formula $∀ X, !(X ⊸ X) ⊸ !(X ⊸ X)$. 
  We have a natural embedding of EAL in our system by translating type 
  variables to second-order variables. Therefore, we have 
  $⊢ t : N ⊸ ... ⊸ N ⊸ !^k N$ and then $ ⊢ (t ⊩ N ⊸ ... ⊸ N ⊸ !^k N)$ (*). 
  We are going to need the two simple lemmas below:
  \begin{enumerate}
    \item We have $⊢ ∀ r, (r ⊩ N) ⊸ N (r\,\nat\,s\,0)$. \\\noindent The idea of the  
    proof is that $(r ⊩ N)$ is equal to 
    $$∀α, ∀X:[α], ∀ f: α, !(∀ y, X\,y ⊸ X\,(f\,y)) ⊸ 
      !(∀ z, X\,z ⊸ X\,(r\,α\,f\,z))$$
     and by taking $α = \nat$, $y = s$ and $z = 0$, we obtain
        $N (r\,\nat\,s\,0)$.
    \item And we have $⊢ ∀ r, N r ⊸ !(r ⊩ N)$. \\\noindent
      Let $H$ be $!(∀y, y ⊩ N ⊸ (s y) ⊩ N) ⊸ !(0 ⊩ N ⊸ r ⊩ N)$.
    \begin{small}
      \begin{prooftree}
        \AxiomC{}
        \UnaryInfC{$0 ⊩ N ⊸ r ⊩ N ⊢ (0 ⊩ N) ⊸  (r ⊩ N)$}

        \AxiomC{$\vdots$}
        \noLine 
        \UnaryInfC{$π₁$}
        \UnaryInfC{$ ⊢ 0 ⊩ N$}
        \BinaryInfC{$0 ⊩ N ⊸ r ⊩ N ⊢ r ⊩ N$}
        \UnaryInfC{$  !(0 ⊩ N ⊸ r ⊩ N) ⊢ ! (r ⊩ N)$}
        \UnaryInfC{$ ⊢ !(0 ⊩ N ⊸ r ⊩ N) ⊸ ! (r ⊩ N)$} 

        \AxiomC{}
        \UnaryInfC{$ N\,r ⊢ N\,r$}
        \UnaryInfC{$ N\,r ⊢ H$} 

        \AxiomC{$\vdots$}
        \noLine 
        \UnaryInfC{$π₂$}
        \UnaryInfC{$ ⊢ ∀y, y ⊩ N ⊸ (s y) ⊩ N)$}
        \UnaryInfC{$ ⊢ !(∀y, y ⊩ N ⊸ (s y) ⊩ N))$}
               
        \BinaryInfC{$N\,r ⊢ !(0 ⊩ N ⊸ r ⊩ N)$} 
        \BinaryInfC{$N\,r ⊢ !(r ⊩ N)$} 
        \doubleLine
        \UnaryInfC{$⊢ ∀ r, N\,r ⊸ !(r ⊩ N)$}
      \end{prooftree}
    \end{small}
     where $π₁$ and $π₂$ use the rule {\sc Equality}
     with 
        $$\H ⊧_{α:\Type,f:α→α,z:α} (0\, α\, f\, z) =_α z \text{ and }
        \H ⊧_{y:\nat,α:\Type,f:α→α,z:α} (s\,y\, α\, f\, z) =_α (y\, α\, f\, (f\,z)).$$
  \end{enumerate}
  Now to prove the sequent
  $⊢ ∀ x₁ ... x_n, N\,x₁ ⊸ ... ⊸ N\,x_n ⊸ !^{k+1} N\,(t\, x₁\,...\,x_n)$, 
  it is enough to find a proof of 
  $⊢ ∀ x₁ ... x_n, !N\,x₁ ⊸ ... ⊸ !N\,x_n ⊸ !^{k} N\,(t\, x₁\,...\,x_n)$ (using
  the {\sc Promotion} rule). By invoking 2, we just have to prove that
  $⊢ ∀ x₁ ... x_n,  (x₁ ⊩ N) ⊸ ... ⊸ (x_n ⊩ N) ⊸ !^{k} N\,(t\, x₁\,...\,x_n)$
  and then by invoking 1, we have to prove
  $⊢ ∀ x₁ ... x_n,  (x₁ ⊩ N) ⊸ ... ⊸ (x_n ⊩ N) ⊸ (t\, x₁\,...\,x_n) ⊩ !^k N$
  which is equivalent to (*).

\end{proof}

\subsubsection*{Second proof of completeness : encoding Kalmar's functions}

The characterization due to Kalmar 
\cite{Rose} states that elementary recursive functions is the smallest class of
functions containing some base functions (constants, projections, addition, multiplication 
and subtraction) and stable by a composition scheme, by
bounded sum and bounded product. In the remaining of the document, we will show
how we can implement this functions and these schemes in our system.

\begin{itemize}
\item It is very easy to find a proof of $⊢ N\,0$ and a proof 
  $⊢∀x,N x ⊸ N (s\,x)$. We can obtain a proof $⊢ N\,(s\,0)$ by composing them.

\item The following proof gives us the addition (in order to make it fit we cut
it in two bits, and the $\vdots$ mean the proof can be easily completed). We use 
``$x + y$" as a notation for the term $(\plus\,x\,y)$.
\begin{small}
\begin{prooftree}
  \AxiomC{$π$}
  \AxiomC{$\vdots$}
  \UnaryInfC{$N\,y, !F ⊢ !(X\,0 ⊸ X y)$}
  \AxiomC{$\vdots$}
  \UnaryInfC{$X\,y ⊸ X (x+y), X\,0 ⊸ X y ⊢ X\,0 ⊸ X
(x+y)$}
  \TrinaryInfC{$N\,x,N\,y,!F, !F ⊢ !(X\,0 ⊸ X(s\,x))$}
  \doubleLine
  \UnaryInfC{$⊢∀x\,y:\nat,N\,x ⊸ N\,y ⊸ N (x+y)$}
\end{prooftree}
 \begin{prooftree}
  \AxiomC{}
  \UnaryInfC{$N\,x ⊢ N\,x$}
  \UnaryInfC{$N\,x ⊢ !(∀z, X(z+y) ⊸ X((s\,z)+y)) ⊸!(X\,(0 + y) ⊸ X
(x+y))$}
  \UnaryInfC{$N\,x ⊢ !(∀z, X(z+y) ⊸ X((s\,z)+y)) ⊸!(X\,y ⊸ X (x+y))$\hspace{-0.65em}}
  
  \AxiomC{$\vdots$}
  \UnaryInfC{$!F ⊢ !(∀z, X(z+y) ⊸ X(s\,(z+y)))$}
  \UnaryInfC{$!F ⊢ !(∀z, X(z+y) ⊸ X((s\,z)+y))$}
  \BinaryInfC{$N\,x, !F ⊢ !(X\,y ⊸ X (x+y))$}
  \noLine
  \UnaryInfC{$π$}
  \end{prooftree}
\end{small}
  Note that we have used in the left branch the {\sc Equality} rule with 
    $\H ⊧ ∀x\,y, (s\,x) + y = s\,(x+y)$
  and 
    $\H ⊧ ∀y, 0 + y = y$.
  We extract the usual $λ$-term for addition 
    $λn\,m:\nat.Λα.λf:α→α.λx:α.n\,f\,(m\,f\,x)$.
\item 
By iterating the addition, it is very easy to find a proof of
$∀x\,y:\nat,N\,x⊸N\,y⊸!N\,(\mult\,x\,y)$. Alas in order to build the scheme of
bounded product in the following, we will need to find a proof of
$∀x\,y:\nat,N\,x⊸N\,y⊸ N\,(\mult\,x\,y)$. The proof has been found and checked
using a proof assistant based on our system, but it is too big to fit in there.
       The $λ$-term extracted from this proof
      is 
$\lambda n\, m:\nat.\Lambda\alpha.\lambda
f:\alpha\rightarrow\alpha.n\,\alpha\,(m\,(\alpha\rightarrow\alpha)\,(\lambda
g:\alpha\rightarrow\alpha.\lambda x:\alpha.f\,(g\, x)))\,(\lambda x:\alpha.x)$.

\item We can implement the predecessor function by proving 
    $⊢ ∀ x, N\,x⊸N\,(\pred x)$. The proof is not so easy: you have to
    instantiate a second-order quantifier with $x↦(X p(x) ⊸ X x) ⊗ X p(x)$. 
    It corresponds to a very standard technique for implementing the 
    predecessor of $n$ in $λ$-calculus: we iterate the function $(a,b) ↦ (a+1,a)$
    $n$ times on $(0,0)$ and then we use the second projection to retrieve $n-1$.

\item Then it is easy to implement the subtraction by proving
     $⊢∀x\,y,N\,x⊸N\,y⊸!N\,(\minus\,x\,y)$ with the induction principle $N\,y$.

\item The following proof is called coercion (in \cite{DanosJoinet}), it will 
    allow us to replace occurences of $N\,x$ at a negative position by $!N\,x$. 
    Let $H$ be the formula $∀y,N\,y ⊸ N\,(s\,y)$.
    \begin{small}
    \begin{prooftree}
      \AxiomC{}
      \UnaryInfC{$ N0 ⊸ N\,x ⊢ N0 ⊸ N\,x$}
      \AxiomC{\textit{proof for zero}}
      \UnaryInfC{$ ⊢ N 0$}
      \BinaryInfC{$ N0 ⊸ N\,x ⊢ N\,x$}
      \UnaryInfC{$ !(N0 ⊸ N\,x) ⊢ !N\,x$}
      \UnaryInfC{$⊢ !(N0 ⊸ N\,x) ⊸ !N\,x$}

      \AxiomC{}
      \UnaryInfC{$N\,x ⊢ N\,x$}
      \UnaryInfC{$N\,x ⊢ !H ⊸ !(N0 ⊸ N\,x)$}

      \AxiomC{\textit{proof for successor}}
      \UnaryInfC{$⊢ H$}
      \UnaryInfC{$⊢ !H$}

      \BinaryInfC{$N\,x ⊢ !(N0 ⊸ N\,x)$}
      \BinaryInfC{$N\,x ⊢ !N\,x$}
      \doubleLine
      \UnaryInfC{$⊢∀x,N\,x ⊸ !N\,x$}
    \end{prooftree}
    \end{small}
      Using this we can now bring every proof of totality 
      $$ ⊢ ∀ x₁,...,x_n, !^{k₁}N\,x₁ ⊸ ... ⊸ !^{k_n}N\,x_n ⊸ !^kN\,(f\,x₁\,...\,x_n)$$
    to a ``normal form" 
      $$ ⊢ ∀ x₁,...,x_n, N\,x₁ ⊸ ... ⊸ N\,x_n ⊸ !^kN\,(f\,x₁\,...\,x_n).$$
  
 \item The composition scheme is implemented by the following proof 
      (where $s=\sum_{i=1}^q k_i$ and where $A^{(p)}$ means $A$ is duplicated
$p$ times).
   \begin{small}
    \begin{prooftree}
      \AxiomC{$
          \stackrel{\text{\textit{\normalsize proof for $g₁$}}}{
      \overline{N\,x₁,...,N\,x_q⊢!^{k_1}N\,(g₁\,x₁\,...\,x_q)}}
        \quad     ... \quad
          \stackrel{\text{\textit{\normalsize proof for $g_p$}}}{
            \overline{N\,x₁,... ,N\,x_q⊢!^{k_q}N\,(g₁\,x₁\,...\,x_q)}}$
      }
      \AxiomC{$π$}
      \BinaryInfC{$ (N\,x₁)^{(p)}, ..., (N\,x_q)^{(p)} ⊢
                    !^{s+k} N\,(f\,(g₁\,x₁...\,x_q)...(g_p\,x₁...\,x_q))$}
      \UnaryInfC{$(!N\,x₁)^{(p)}, ..., (!N\,x_q)^{(p)} ⊢
                    !^{s+k+1}N\,(f\,(g₁\,x₁...\,x_q)...(g_p\,x₁...\,x_q))$}
      \doubleLine
      \UnaryInfC{$!N\,x₁, ..., !N\,x_q ⊢
                    !^{s+k+1}N\,(f\,(g₁\,x₁...\,x_q)...(g_p\,x₁...\,x_q))$}
      \doubleLine
      \UnaryInfC{$N\,x₁, ..., N\,x_q ⊢
                    !^{s+k+1}N\,(f\,(g₁\,x₁...\,x_q)...(g_p\,x₁...\,x_q))$}
      \doubleLine
      \UnaryInfC{$⊢∀x₁...x_n, N\,x₁ ⊸ ... ⊸ N\,x_q ⊸
                    !^{s+k+1}N\,(f\,(g₁\,x₁...\,x_q)...(g_p\,x₁...\,x_q))$}
    \end{prooftree}
    \begin{prooftree}
      \AxiomC{\textit{proof for $f$}}
      \UnaryInfC{$N\,(g₁\,x₁\,...\,x_q),..., N\,(g_p\,x₁\,...\,x_q) ⊢
                  !^{k}N\,(f\,(g₁\,x₁...\,x_q)...(g_p\,x₁...\,x_q))$}
      \UnaryInfC{$!^{s}N\,(g₁\,x₁\,...\,x_q),..., !^{s} N\,(g_p\,x₁\,...\,x_q) ⊢
                  !^{s+k}N\,(f\,(g₁\,x₁...\,x_q)...(g_p\,x₁...\,x_q))$}
      \UnaryInfC{$!^{s}N\,(g₁\,x₁\,...\,x_q),..., !^{s} N\,(g_p\,x₁\,...\,x_q) ⊢
                  !^{s+k}N\,(f\,(g₁\,x₁...\,x_q)...(g_p\,x₁...\,x_q))$}
      \UnaryInfC{$⊢!^{k_1}N\,(g₁\,x₁\,...\,x_q) ⊸ ⋅⋅⋅ ⊸ !^{k_p} N\,(g_p\,x₁\,...\,x_q) ⊸
                  !^{s+k}N\,(f\,(g₁\,x₁...\,x_q)...(g_p\,x₁...\,x_q))$}
      \noLine
      \UnaryInfC{$π$}
    \end{prooftree}
    \end{small}

\item Finally, the bounded sum is implemented by the following proof of 
$ !!(∀y,N\,y ⊸ !^kN\,(f\,y)) ⊸ ∀n, N\,n ⊸  !^{k+2}N (\sumb\,f\, n).$
  The key idea in this proof is to use the induction principle of $N\,n$ with
  the predicate $x↦N\,x ⊗ !^kN\, (\sumb\,f\,x)$.
  Let $H$ be the formula $∀y,N\,y ⊸ !^kN\,(f\,y)$ and $K₁$ be the formula 
    $$∀ y, !(N\,y ⊗ !^kN (\sumb\,f y)) ⊸ !(N\,(s\,y) ⊗ !^kN
(\sumb\,f\,(s\,y)))$$
  and $K₂$ the formula $!(N 0 ⊗ !^kN (\sumb\,f\, 0)) ⊸ !(N\,n ⊗ !^kN (\sumb\, f\, n))$.
  \begin{small}
  \begin{prooftree}
    \AxiomC{$N\,n ⊢ N\,n$}
    \UnaryInfC{$N\,n ⊢ !K₁ ⊸ !K₂$}
   
    \AxiomC{π}
    \BinaryInfC{$N\,n, !!H ⊢ !K₂$}

    \AxiomC{\textit{proof for zero}}
    \UnaryInfC{$⊢ N\,0$}
    \AxiomC{\textit{proof for zero}}
    \UnaryInfC{$⊢ N\,0$}
    \doubleLine
    \UnaryInfC{$⊢ !^kN\,0$}
    \BinaryInfC{$⊢ N 0 ⊗ !^kN 0$}
    \UnaryInfC{$⊢ !(N 0 ⊗ !^kN 0)$}
    \UnaryInfC{$⊢ !(N 0 ⊗ !^kN (\sumb\,f\, 0))$}

    \AxiomC{}
    \UnaryInfC{$!^kN (\sumb\, f\, n) ⊢ !^kN (\sumb\, f\, n)$}
    \UnaryInfC{$N n, !^k N (\sumb\, f\, n) ⊢ !^kN (\sumb\, f\, n)$}
    \UnaryInfC{$N n ⊗ !^kN (\sumb\, f\, n) ⊢ !^kN (\sumb\, f\, n)$}
    \UnaryInfC{\hspace{-0.4em}$!(N n ⊗ !^kN (\sumb\, f\, n)) ⊢ !^{k+1}N (\sumb\, f\, n)$\hspace{-0.4em}}
    \BinaryInfC{$K₂ ⊢ !^{k+1}N (\sumb\, f\, n)$}
    \BinaryInfC{$!!H, N\,n ⊢ !^{k+2}N (\sumb\, f\, n)$}
    \UnaryInfC{$⊢ !!(∀y,N\,y ⊸ !^kN\,(f\,y)) ⊸ ∀n, N\,n ⊸  !^{k+2}N (\sumb\,f\, n)$}
  \end{prooftree}
  \end{small}
  \begin{small}
  \begin{prooftree}
    \AxiomC{\textit{proof for successor}}
    \UnaryInfC{$N\,y ⊢  N\,(s\,y)$}
    \UnaryInfC{$N\,y, !^kN (\sumb\,f y) ⊢  N\,(s\,y)$}
    \UnaryInfC{$N\,y ⊗ !^kN (\sumb\,f y) ⊢  N\,(s\,y)$}

    \AxiomC{\textit{proof for addition}}
    \UnaryInfC{$N (\sumb\,f y), N (f\,y) ⊢ N ((f\,y) + (\sumb\,f\,y))$}
    \doubleLine
    \UnaryInfC{$!^kN (\sumb\,f y), !^kN (f\,y) ⊢ !^kN ((f\,y) + (\sumb\,f\,y))$}
    \UnaryInfC{$!^kN (\sumb\,f y) ⊢ !^kN (f\,y) ⊸ !^kN ((f\,y) + (\sumb\,f\,y))$}
    
    \AxiomC{$\vdots$}
    \UnaryInfC{$H, N\,y ⊢ !^kN (f\,y)$}
    \BinaryInfC{$H, N\,y , !^kN (\sumb\,f y) ⊢ !^kN ((f\,y) + (\sumb\,f\,y))$}
    \UnaryInfC{$H, N\,y , !^kN (\sumb\,f y) ⊢ !^kN (\sumb\,f\,(s\,y))$}
    \UnaryInfC{$H, N\,y ⊗ !^kN (\sumb\,f y) ⊢ !^kN (\sumb\,f\,(s\,y))$}
    \BinaryInfC{$H, N\,y ⊗ !^kN (\sumb\,f y), N\,y ⊗ !^kN (\sumb\,f y) ⊢ N\,(s\,y) ⊗ !^kN (\sumb\,f\,(s\,y))$}
    \UnaryInfC{$!H, !(N\,y ⊗ !^kN (\sumb\,f y)), !(N\,y ⊗ !^kN (\sumb\,f y)) ⊢ !(N\,(s\,y) ⊗ !^kN (\sumb\,f\,(s\,y)))$}
    \UnaryInfC{$!H, !(N\,y ⊗ !^kN (\sumb\,f y)) ⊢ !(N\,(s\,y) ⊗ !^kN (\sumb\,f\,(s\,y)))$}
    \doubleLine
    \UnaryInfC{$!H ⊢ K₁$}
    \UnaryInfC{$!!H ⊢ !K₁$}
    \noLine
    \UnaryInfC{π}
   \end{prooftree}
  \end{small}
  and we obtain the bounded product by replacing proofs for zeros by proof for
ones and the proof for addition by a proof for multiplication.
\end{itemize}

\bibliographystyle{eptcs} 
\bibliography{bibliography}

\end{document}